\newtheorem{theorem}{Theorem}
\newtheorem{lemma}{Lemma}
\newtheorem{proposition}{Proposition}
\newtheorem{definition}{Definition}
\newtheorem*{definition*}{Definition}
\newtheorem{corollary}{Corollary}
\newtheorem{remark}{Remark}
\renewcommand{\vec}[1]{\mathbf{#1}}
\newcommand{\dpa}{Discriminatory Auction}
\newcommand{\upa}{Uniform Price Auction}
\newcommand{\bpoa}{Bayesian Price of Anarchy}
\newcommand{\xos}{{\bf XOS}}
\newcommand{\otrem}[1]{#1}
\newcommand{\vorem}[1]{#1}
\newenvironment{proofof}[1]
      {\medskip\noindent{\em Proof of #1.}\ }
      {\hfill$\Box$\medskip}
\begin{document}
\bibliographystyle{plain}
\date{}
\title{
On the Inefficiency of Standard Multi-Unit Auctions
\protect\footnote{
A preliminary version of this work appeared in~\cite{deKeijzer13}.
}
}
\author{%
Bart de Keijzer$^1$\qquad
Evangelos Markakis$^3$\qquad
Guido Sch\"afer$^{2}$\qquad
Orestis Telelis$^4$
\protect\footnote{
Research partially developed while the author was affiliated to Athens
University of Economics and Business, through his work on the project ``DDCOD''
(PE6-213). The project was implemented within the framework of the Action
``Supporting Postdoctoral Researchers'' of the Operational Program ``Education
and Lifelong Learning'' (Action's Beneficiary: General Secretariat for Research
and Technology), and was co-financed by the European Union (European Social Fund
-- ESF) and the Greek State.
}
\medskip\\
$^1$Department of Computer, Control and Management Engineering,\\
Sapienza University of Rome, Italy\\
{\tt dekeijzer@dis.uniroma1.it}\smallskip\\
$^2$CWI and VU University Amsterdam, The Netherlands\\
{\tt g.schaefer@cwi.nl}\smallskip\\
$^3$Dept. of Informatics, Athens University of Economics and Business, Greece\\
{\tt markakis@gmail.com}\smallskip\\
$^4$Department of Digital Systems, University of Piraeus, Greece\\
{\tt telelis@gmail.com}
}

\maketitle

\begin{abstract}
We study two standard multi-unit auction formats for allocating multiple units
of a single good to multi-demand bidders. The first one is the Discriminatory
Auction, which charges every winner his winning bids. The second is the Uniform
Price Auction, which determines a uniform price to be paid per unit. Variants of
both formats find applications ranging from the allocation of state bonds to
investors, to online sales over the internet, facilitated by popular online
brokers. 

For these formats, we consider two bidding interfaces: (i) standard bidding,
which is most prevalent in the scientific literature, and (ii) uniform bidding,
which is more popular in practice. In this work, we evaluate the economic
inefficiency of both multi-unit auction formats for both bidding interfaces, by
means of upper and lower bounds on the Price of Anarchy for pure Nash equilibria
and mixed Bayes-Nash equilibria. Our developments improve significantly upon
bounds that have been obtained recently in [Markakis, Telelis, ToCS 2014] and
[Syrgkanis, Tardos, STOC 2013] for submodular valuation functions. Moreover, we
consider for the first time bidders with subadditive valuation functions for
these auction formats. Our results signify that these auctions are nearly
efficient, which provides further justification for their use in practice.
\end{abstract}

\section{Introduction}
\label{section:introduction}
We study standard multi-unit auction formats for allocating multiple units of a
single good to multi-demand bidders. Multi-unit auctions are one of the most
widespread and popular tools for selling identical units of a good with a single
auction process. In practice, they have been in use for a long time, one of
their most prominent applications being the auctions offered by the U.S. and
U.K. Treasuries for selling bonds to investors, see e.g., the U.S. treasury
report \cite{treasury}. In more recent years, they are also implemented by
various online brokers~\cite{ebid,Ockenfels06}. In the literature, multi-unit
auctions have been a subject of study ever since the seminal work of
Vickrey~\cite{Vickrey61} (although the need for such a market enabler was
conceived even earlier, by Friedman, in~\cite{Friedman60}) and the success of
these formats has led to a resurgence of interest in auction design. 

There are three simple {\em Standard Multi-Unit Auction} formats that have
prevailed and are being implemented; these are the {\em Discriminatory Auction},
the {\em Uniform Price Auction} and the {\em Vickrey Multi-Unit Auction}. All
three formats share a common allocation rule and bidding interface and have seen
extensive study in auction theory~\cite{Krishna02,Milgrom04}. Each bidder under
these formats is asked to issue a sequence of non-increasing marginal bids, one
for each additional unit \otrem{(i.e., a submodular curve)}. For an auction of
$k$ units, the $k$ highest marginal bids win, and each grants its issuing bidder
a single unit. The  formats differ in the way that payments are determined for
the winning bidders. The Discriminatory Auction prescribes that each bidder pays
the sum of his winning bids. The Uniform Price Auction charges the lowest
winning or highest losing marginal bid per allocated unit. The Vickrey auction
charges according to an instance of the Clarke payment rule (thus being a
generalization of the well known single-item Second-Price auction).

Except for the Vickrey auction, which is truthful and efficient, the other two
formats suffer from a {\em demand reduction} effect~\cite{Ausubel02}, whereby
bidders may have incentives to understate their value, so as to receive less
units at a better price. This effect is amplified when bidders have
non-submodular valuation functions, since the bidding interface forces them to
encode their value within a submodular bid vector. Even worse, in many practical
occasions bidders are asked for a {\em uniform bid} per unit together with an
upper bound on the number of desired units. In such a setting, each bidder is
required to ``compress'' his valuation function into a bid that scales linearly
with the number of units. The mentioned allocation and pricing rules apply also
in this {\em uniform bidding} setting, thus yielding different versions of
Discriminatory and \upa s. Despite the volume of research from the economics
community~\cite{Ausubel02,Noussair95,Engelbrecht-Wiggans98,Binmore00,Reny99,Bresky08}
and the widespread popularity of these auction formats, the first attempts of
quantifying their economic efficiency are only very
recent~\cite{Markakis12,ST13,Markakis14}. There has also been no study of these
auction formats for non-submodular valuations, as noted by Milgrom
\cite{Milgrom04}. 

\begin{table}[t]
\begin{center}
\begin{tabular}{c@{\quad}|@{\quad}c@{\quad}c}
\multirow{3}{*}{{\bf Valuation Functions}} & \multicolumn{2}{c}{{\bf Auction Format}} \\
& \multicolumn{2}{c}{$(${\tt bidding: standard $|$ uniform}$)$}\\
 & {\em Discriminatory Auction} & {\em Uniform Price Auction}  \\
\hline
\multirow{3}{*}{{\em Submodular}} & & \\
 & $\displaystyle\frac{e}{(e-1)}$ & $3.1462$  \\
 & \\
\multirow{3}{*}{{\em Subadditive}} & & \\
 & $2$ $\Bigl|\Bigr.$ $\displaystyle\frac{2e}{(e-1)}$ & $4$ $\Bigl|\Bigr.$ $6.2924$\\
 &  
\end{tabular}
\end{center}
\caption{
Upper bounds on the (Bayesian) economic inefficiency of multi-unit auctions.}
\label{table:upper-bounds}
\end{table}

\subsection{Contribution}
\label{subsection:contribution}

We study the inefficiency of the \dpa\ and \upa\ under the standard and uniform
bidding interfaces. Our main results are improved inefficiency bounds for
bidders with submodular valuation functions and new bounds for bidders with
subadditive valuation functions. To the best of our knowledge, for subadditive
valuation functions our bounds provide the first quantification of the
inefficiency of these auction formats. The results are summarized in
Table~\ref{table:upper-bounds}. Our bounds indicate that these auctions are
nearly efficient, which paired with their simplicity provides further
justification for their use in practice. 

%Our focus is mostly on \otrem{quantifying} the efficiency of Bayes-Nash
%equilibria (see Section~\ref{section:sm}); \otrem{however, 
\otrem{As a warmup, we begin by discussing pure Nash equilibria, in
Section~\ref{section:pne}. In particular, we investigate their properties and
establish inefficiency bounds for both auction formats and both bidding
interfaces. An interesting consequence is that discriminatory pricing always
leads to efficient pure Nash equilibria.}

%, but we also discuss pure Nash equilibria (see {\bf Appendix A}).
Subsequently, we proceed to develop our main results, for the inefficiency of
Bayes-Nash equilibria, in Section~\ref{section:sm}. For submodular valuation
functions, we derive upper bounds of $\frac{e}{e-1}$ and $3.1462 <
\frac{2e}{e-1}$ for the Discriminatory and the Uniform Price Auctions,
respectively. These improve upon the previously best known bounds of
$\frac{2e}{e-1}$~\cite{ST13} and $4$~\cite{Markakis14}. \otrem{Regarding lower
bounds, for the Uniform Price Auction, we establish a lower bound of $2$, which
holds even for pure Nash equilibria. We also prove a lower bound of
$\frac{e}{e-1}$ for the \dpa, with respect to the currently known proof
techniques~\cite{ST13,Feldman12,Christodoulou08,Bhawalkar11,Hassidim11}. As a
consequence, unless the upper bound of $\frac{e}{e-1}$ for the \dpa\ is tight,
its improvement requires the development of novel tools.}

We then move to subadditive valuation functions, where we obtain bounds of
$\frac{2e}{e-1}$ and $6.2924<\frac{4e}{e-1}$ for Discriminatory and Uniform
Price Auctions respectively, independently of the bidding interface. These, we
can improve by conditioning to the standard bidding interface, to $2$ and $4$,
respectively, by adapting a recent technique from~\cite{Feldman12}.  In Section
\ref{sec:z-smoothness} we discuss further applications of our results in
connection with the smoothness framework of~\cite{ST13}. In particular, some of
our bounds carry over to simultaneous and sequential compositions of such
auctions (see Table \ref{table:upper-bounds2} in Section
\ref{sec:z-smoothness}).

\section{Related Work}

A first quantification of the Uniform Price Auction's inefficienct for bidders
with submodular valuation functions was presented in~\cite{Markakis14}. In
particular, it was shown that, under the standard bidding interface, this
auction format admits pure Nash equilibria in undominated strategies, with at
most $\frac{e}{e-1}$ times less welfare than the socially optimal allocation.
This bound was shown to be tight. For the inefficiency of (mixed) Bayes-Nash
equilibria an upper bound of $4$ was developed. The first quantification of the
Discriminatory Auction's inefficiency for submodular bidders was developed by
Syrgkanis and Tardos in~\cite{ST13}. The authors presented several extensions of
Roughgarden's {\em smoothness technique}~\cite{Roughgarden09}, which they used
for proving inefficiency upper bounds for several auction formats and
compositions of them.  For the Discriminatory Auction with submodular bidders in
particular they proved an upper bound of $\frac{2e}{e-1}$. More developments
from~\cite{ST13} we discuss below. Recently, Christodoulou {\em et al.}
constructed in~\cite{Christodoulou13} a lower bounding example of $1.109$ for
the Price of Anarchy of mixed Nash equilibria of the Discriminatory Auction, in
the full information settting, when bidders have submodular valuation functions. 

Let us note that, concerning bidders with {\em subadditive} valuation
functions, there is no previous work on quantifying the inefficiency of
multi-unit auction formats. Since the appearance of the preliminary version of
our work in~\cite{deKeijzer13}, Christodoulou {\em et al.} showed
in~\cite{Christodoulou13} a lower bound of $2$ for the Price of Anarchy of mixed
Nash equilibria of the Discriminatory Auction for bidders with subadditive
valuation functions, which matches our upper bound of $2$. Finally, among the
studies concerning the inefficiency of multi-unit auctions, we mention the
extensive and definitive treatment of the {\em Generalized Second-Price} Auction 
format by Caragiannis {\em et al.}~\cite{Caragiannis14}, wherein almost tight
bounds are derived (in the full and incomplete information settings) by
means of {\em smoothness} techniques.

The multi-unit auction formats that we examine here present technical and
conceptual resemblance to the {\em Simultaneous Auctions} format that has
received significant attention
recently~\cite{Feldman12,Christodoulou08,Bhawalkar11,Hassidim11,ST13,Christodoulou13,Roughgarden14}.
However, upper bounds in this setting do not  carry over to our format.
Simultaneous auctions were first studied by Christodoulou, Kovacs and
Schapira~\cite{Christodoulou08}. The authors proposed that each of a collection
of distinct goods, with one unit available for each of them, is sold in a
distinct {\em Second Price Auction}, simultaneously and independently of the
other goods. Bidders in this setting may have combinatorial valuation functions
over the subsets of goods, but they are forced to bid separately for each good.
For bidders with fractionally subadditive valuation functions, they proved a
tight upper bound of $2$ on the mixed Bayesian Price of Anarchy of the
Simultaneous Second Price Auction. Bhawalkar and Roughgarden~\cite{Bhawalkar11}
extended the study of inefficiency for subadditive bidders and showed an upper
bound of $O(\log m)$ which was recently reduced to $4$ by Feldman {\em et
al.}~\cite{Feldman12}. For arbitrary valuation functions, Fu, Kleinberg and
Lavi~\cite{Fu12} proved an upper bound of $2$ on the inefficiency of pure Nash
equilibria, when they exist.

Hassidim {\em et al.}~\cite{Hassidim11} studied {\em Simultaneous First Price
Auctions}. They showed that pure Nash equilibria in this format are always
efficient, when they exist. They proved constant upper bounds on the
inefficiency of mixed Nash equilibria for (fractionally) subadditive valuation
functions and $O(\log m)$ and $O(m)$ for the inefficiency of mixed Bayes-Nash
equilibria for subadditive and arbitrary valuation functions. Syrgkanis showed
in~\cite{Syrgkanis12a} that this format has Bayesian Price of Anarchy
$\frac{e}{e-1}$ for fractionally subadditive valuation functions. Feldman {\em
et al.}~\cite{Feldman12} proved an upper bound of $2$ for subadditive ones. 

Recently, Syrgkanis and Tardos~\cite{ST13} and Roughgarden~\cite{Roughgarden12}
independently developed extensions of the {\em smoothness technique} for games
of incomplete information. In \cite{ST13}, these ideas are further developed for
analyzing the inefficiency of simultaneous and sequential {\em compositions} of
simple auction mechanisms. They demonstrate extensive applications of their
techniques on welfare analysis of standard multi-unit auction formats {\em and}
their (simultaneous and sequential) compositions. For submodular valuation
functions, they prove inefficiency upper bounds of $\frac{2e}{e-1}$ and
$\frac{4e}{e-1}$ for the \dpa\ and \upa, respectively. Here, we improve upon
these results; our improvements carry over to simultaneous and sequential
compositions as well.

\section{Definitions and Preliminaries}
\label{section:definitions}

We consider auctioning $k$ units of a single good to a set $[n] = \{1,...,n\}$
of $n$ bidders, indexed by $i=1,\dots, n$. Every bidder $i\in[n]$ has a
non-negative non-decreasing private valuation function $v_i:(\{0\} \cup
[k])\mapsto\Re^+$ over quantities of units, where $v_i(0) = 0$. We denote by
$\vec{v}=(v_1,\dots,v_n)$ the {\em valuation function profile} of bidders. We
consider in particular (symmetric) \emph{submodular} and \emph{subadditive}
functions:
 
\begin{definition}
A valuation function $f:(\{0\}\cup[k])\mapsto\Re^+$ is called:
\begin{compactitem}
\item submodular iff for every $x<y$, $f(x)-f(x-1)\geq f(y)-f(y-1)$.
\item subadditive iff for every $x,y$, $f(x+y)\leq f(x)+f(y)$.
\end{compactitem}
\end{definition}

\noindent The class of submodular functions is strictly contained in the class
of subadditive ones~\cite{Lehmann06}. For any non-negative non-decreasing
function $f:(\{0\}\cup[k])\mapsto\Re^+$ and any integers $x, y\in [k], x < y$,
the following are known to hold:
If $f$ is submodular, then $f(x)/x \geq f(y)/y$. 
If $f$ is subadditive, then $f(x)/x\geq f(y)/(x+y)$.

A valuation function $v_i$ can be specified by a vector
$\vec{m}_i=(m_i(1),...,m_i(k))$ of the {\em marginal values} $m_i(j) = v_i(j) -
v_i(j-1)$ of bidder $i$, for each additional unit in his allocation (if $v_i$ is
submodular, $m_i(j)\geq m_i(j+1)$). 

\subsection*{Standard Multi-Unit Auctions}

The {\em standard format}, as described in auction
theory~\cite{Krishna02,Milgrom04}, prescribes that each bidder $i\in[n]$ submits
a vector of $k$ non-negative non-increasing  \emph{marginal bids}
$\vec{b}_i=(b_i(1),\dots, b_i(k))$ with $b_i(1) \geq \dots \geq b_i(k)$. We will
often refer to these simply as \emph{bids}. In the {\em uniform bidding format},
each bidder $i$ submits only a single bid $\bar{b}_i$ along with a quantity $q_i
\le k$, the interpretation being that $i$ is willing to pay at most $\bar{b}_i$
per unit for up to $q_i$ units. 

The allocation rule of standard multi-unit auctions grants the issuer of each of
the $k$ highest (marginal) bids a distinct unit per winning bid. The pricing
rule differentiates the formats. Let $x_i(\vec{b})$ be the number of units won
by bidder $i$ under profile $\vec{b}=(\vec{b}_1,\dots,\vec{b}_n)$. We study the
following two pricing rules:

\bigskip

%%%%%%%%%%%%%%%%%%%%%%%%%%%%%%%%%%%%%%%%%%%%%%%%%%%%%%%%%%%%%%%%%%%%%%%%%%
%%%%%%%%%%%%%%%%%%%%%%%%%%%%%%%%%%%%%%%%%%%%%%%%%%%%%%%%%%%%%%%%%%%%%%%%%%
%%%%%%%%%%%%%%%%%%%%%%%%%%%%%%%%%%%%%%%%%%%%%%%%%%%%%%%%%%%%%%%%%%%%%%%%%%
\begin{comment}
\noindent
\begin{tabular}{l@{\quad}r}
\begin{minipage}[t]{0.47\textwidth}
\noindent {\em\textbf{(i)} Discriminatory Pricing.} 
Every bidder $i$ pays for every unit a price equal to his corresponding winning bid, i.e., the utility of $i$ is \\
$\mbox{}\quad
\textstyle u_i^{v_i}(\vec{b}) = v_i\left(x_i(\vec{b})\right)-\sum_{j=1}^{x_i(\vec{b})}b_i(j).
$
\end{minipage}
&
\begin{minipage}[t]{0.47\textwidth}
\noindent {\em\textbf{(ii)} Uniform Pricing.} \qquad
Every bid\-der $i$ pays for every unit a price equal to the \emph{highest losing bid} $p(\vec{b})$, i.e., the utility of $i$ is \\
$\mbox{}\quad
u_i^{v_i}(\vec{b}) = v_i\left(x_i(\vec{b})\right) - x_i(\vec{b}) p(\vec{b}).
$
\end{minipage}
\end{tabular}
\end{comment}
%%%%%%%%%%%%%%%%%%%%%%%%%%%%%%%%%%%%%%%%%%%%%%%%%%%%%%%%%%%%%%%%%%%%%%%%%%
%%%%%%%%%%%%%%%%%%%%%%%%%%%%%%%%%%%%%%%%%%%%%%%%%%%%%%%%%%%%%%%%%%%%%%%%%%
%%%%%%%%%%%%%%%%%%%%%%%%%%%%%%%%%%%%%%%%%%%%%%%%%%%%%%%%%%%%%%%%%%%%%%%%%%

\noindent {\em\textbf{(i)} Discriminatory Pricing.} Every bidder $i$ pays for
every unit a price equal to his corresponding winning bid, i.e., the utility of
$i$ is:
%%%
$
u_i^{v_i}(\vec{b}) = v_i\left(x_i(\vec{b})\right)-\sum_{j=1}^{x_i(\vec{b})}b_i(j).
$

\bigskip

\noindent {\em\textbf{(ii)} Uniform Pricing.} Every bid\-der $i$ pays for every
unit a price equal to the \emph{highest losing bid} $p(\vec{b})$, i.e., the
utility of $i$ is: 
%%%
$
u_i^{v_i}(\vec{b}) = v_i\left(x_i(\vec{b})\right) - x_i(\vec{b}) p(\vec{b}).
$

\bigskip

\noindent For a bidding profile $\vec{b}$, the produced allocation
$\vec{x}(\vec{b})=(x_1(\vec{b}), x_2(\vec{b}),\dots,x_n(\vec{b}))$ has a {\em
social welfare} equal to the bidders' total value:
%%%
$
SW(\vec{v}, \vec{b}) = \sum_{i=1}^nv_i(x_i(\vec{b}))
$.
%%%
The (pure) Price of Anarchy is the worst case ratio, over all pure Nash
equilibrium profiles $\vec{b}$, of the optimal social welfare over $SW(\vec{v},
\vec{b})$. 

\subsection*{Incomplete Information}

Under the incomplete information model of Harsanyi, the valuation function
$\vec{v}_i$ of bidder $i$ is drawn from a finite set $V_i$ according to a
discrete probability distribution $\pi_i: V_i\rightarrow [0,1]$ (independently
of the other bidders); we will write $\vec{v}_i \sim\pi_i$. The actual drawn
valuation function of every bidder is {\em private}. A valuation profile
$\vec{v}=(\vec{v}_1,\ldots, \vec{v}_n) \in {\cal V} = \times_{i \in [n]} V_i$ is
drawn from a {\em publicly known distribution} $\pi:{\cal V} \rightarrow [0,1]$,
where $\pi$ is the product distribution of $\pi_1, \ldots, \pi_n$, i.e.,
$\pi(\vec{v}) \mapsto \prod_{i \in [n]} \pi_i(\vec{v}_i)$. Every bidder $i$
knows his own valuation function but does not know the valuation function
$\vec{v}_{i'}$ drawn by any other bidder $i'\neq i$. Bidder $i$ may only use his
knowledge of $\pi$ to estimate $\vec{v}_{-i}$. Given the publicly known
distribution $\pi$, the (possibly mixed) strategy of every bidder is a function
of his own valuation $\vec{v}_i$, denoted by $B_i(\vec{v}_i)$. $B_i$ maps a
valuation function $\vec{v}_i \in V_i$ to a {\em distribution}
$B_i(\vec{v}_i)=B_i^{v_i}$, over all possible bid vectors for $i$. In this case
we will write $\vec{b}_i\sim B_i^{v_i}$, for any particular bid vector
$\vec{b}_i$ drawn from this distribution. We also use the notation
$\vec{B}_{-i}^{\vec{v}_{-i}}$, to refer to the vector of randomized strategies
of bidders other than $i$, under profile $\vec{v}_{-i}$. 

A {\it Bayes-Nash equilibrium} (BNE) is a strategy profile $\vec{B}=(B_1,\dots,
B_n)$ such that for every bidder $i$ and for every valuation $\vec{v}_i$,
$B_i(\vec{v}_i)$ maximizes the utility of $i$ in expectation, over the
distribution of the other bidders' valuations $\vec{w}_{-i}$ {\em given
$\vec{v}_i$} and over the distribution of $i$'s own and the other bidders'
strategies, $\vec{B}^{(\vec{v}_i,\vec{w}_{-i})}$, i.e., for every pure strategy
$\vec{c}_i$ of $i$:
%%%
\[
\mathbb{E}_{\substack{
\vec{w}_{-i}|\vec{v}_i,\\
\vec{b}\sim\vec{B}^{(\vec{v}_i,\vec{w}_{-i})}
}}
\Bigl[u_i^{v_i}(\vec{b})\Bigr]
\geq 
\mathbb{E}_{\substack{
\vec{w}_{-i}|\vec{v}_i,\\
\vec{b}_{-i}\sim\vec{B}^{\vec{w}_{-i}}
}}
\Bigl[u_i^{v_i}(\vec{c}_i,\vec{b}_{-i})\Bigr]
\]
%%%
\noindent where $\mathbb{E}_{\vec{v}}$ and $\mathbb{E}_{\vec{w}_{-i}|\vec{v}_i}$
denote the expectation over the distributions $\pi$ and $\pi(\cdot|\vec{v}_i)$
(i.e., given $\vec{v}_i$), respectively. 

Fix a valuation profile $\vec{v}\in{\cal V}$ and consider a (mixed) bidding
configuration $\vec{B}^{\vec{v}}$ under $\vec{v}$. The Social Welfare
$SW(\vec{v}, \vec{B}^{\vec{v}})$ under $\vec{B}^{\vec{v}}$ when the valuations
are $\vec{v}$ is defined as the expectation over the bidding profiles chosen by
the bidders from their randomized strategies, i.e., 
%%%
$
SW(\vec{v}, \vec{B}^{\vec{v}})
=
\mathbb{E}_{\vec{b}\sim\vec{B}^{\vec{v}}}
\left[\sum_iv_i(x_i(\vec{b}))\right]
$. 
%%%
The {\em expected} Social Welfare in Bayes-Nash equilibrium $\vec{B}^{\vec{v}}$
is then $\mathbb{E}_{\vec{v}\sim\pi}[SW(\vec{v},\vec{B}^{\vec{v}})]$. The
socially optimum assignment under valuation profile $\vec{v}\in{\cal V}$ will be
denoted by $\vec{x}^{\vec{v}}$. The {\em expected} optimum social welfare is
then $\mathbb{E}_{\vec{v}}[SW(\vec{v},\vec{x}^{\vec{v}})]$. Under these
definitions, we will study the {\em Bayesian Price of Anarchy}, i.e., the worst
case ratio
%%%
$\mathbb{E}_{\vec{v}}[SW(\vec{v}, \vec{x}^{\vec{v}})] /
\mathbb{E}_{\vec{v}}[SW(\vec{v}, \vec{B}^{\vec{v}})]$ 
%%%
over all possible product distributions $\pi$ and Bayes-Nash equilibria
$\vec{B}$ for $\pi$.
%\footnote{As in previous works~\cite{Christodoulou08,Feldman12}, we ensure
%existence of Bayes-Nash equilibria in our auction formats by assuming that
%bidders have bounded and finite strategy spaces, e.g., derived through
%discretization. Our bounds on the auctions' Bayesian inefficiency hold for
%sufficiently fine discretizations (see also Appendix D of~\cite{Feldman12}).}

\vorem{
Following the practice in previous related works, when analyzing the Uniform
Price Auction we make the assumption of \emph{no-overbidding}, i.e., that no
bidder bids more than his value for any number of units in {\em any profile};
formally, for every $s \in [k]$, $\sum_{j \in [s]} b_i(j) \leq v_i(s)$. This
assumption is justified by the fact that overbidding strategies are generally
weakly dominated, in that they may cause violation of individual rationality. On
the other hand, although the Uniform Price Auction has individually rational
equilibrium profiles wherein overbidding occurs, these are generally unboundedly
inefficient (much like in the well-known standard example for the single-item
Vickrey auction). In our analysis, we will use $\beta_j(\vec{b})$ to refer to
the $j$-th lowest winning bid under profile $\vec{b}$; thus
$\beta_1(\vec{b})\leq \dots \le \beta_k(\vec{b})$.  }
%%%
\vorem{
\begin{remark}
\quad\newline
\begin{itemize}
\item[(I)] As in previous related work~\cite{Christodoulou08,Feldman12}, we can
ensure existence of Bayes-Nash equilibria in the auction formats that we study,
if we assume that bidders have bounded and finite strategy spaces, that emerge
e.g. after some discretization. To this end, our bounds on the auctions'
Bayesian Price of Anarchy concern essentially such sufficiently fine discrete
approximations, of our setting with continuous strategy spaces. For a relevant
detailed discussion we refer the reader to~\cite{Feldman12} (Appendix D). 
%%%
\item[(II)] 
%As discussed above, in order to obtain meaningful inefficiency bounds for the
%Uniform Price Auction, we need to use the {\em no-\-overbidding assumption},
%i.e., to assume that bidders never outbid their value for any number of units
%that they bid on. 
For both auction formats, the Discriminatory and the Uniform Price, we analyze
equilibrium profiles wherein no bidder outbids his value for the number of units
he receives in equilibrium; however, in a Discriminatory Auction, we do not need
to restrict the bidders' strategy spaces with no-overbidding, because the
auction does not have equilibria wherein overbidding occurs (contrary to the
Uniform Price Auction).
\end{itemize}
\end{remark}
}

%\section*{APPENDIX A: Pure Nash Equilibria}
\section{Pure Nash Equilibria}
\label{section:pne}

In this section we discuss the properties of pure Nash equilibria of the two
multi-unit auction formats, under both the standard and uniform bidding
interfaces. As we show, pure Nash equilibria are always efficient under the
Discriminatory Auction, unlike the Uniform Price Auction.

\subsection{Uniform Pricing}

Pure Nash equilibria of the Uniform Price Auction have been analyzed recently
in~\cite{Markakis12}. It is well known that a socially optimum allocation can
always be implemented as a pure Nash equilibrium of this auction. Moreoever, by
its loose association to the Vickrey Second-Price Auction, the Uniform Price
Auction retains certain properties in {\em undominated strategies}, for bidders
with submodular valuation functions. In particular, for any $j\in[k]$ , it is
always a {\em weakly dominated strategy} for any bidder $i$ to issue a marginal
bid $b_i(j)> m_i(j)$. Moreoever, issuing $b_i(1)\neq v_i(1)$ is also weakly
dominated. Markakis and Telelis~\cite{Markakis12} showed that pure Nash
equilibria of the Uniform Price Auction in undominated strategies have
inefficiency $\frac{e}{e-1}$. 

In this paper we consider more general no-overbidding strategies, inclusively of
the uniform bidding ones. The following simple example shows that the
inefficiency of pure Nash equilibria of the Uniform Price Auction in general
no-overbidding strategies is slightly higher than the bound of~\cite{Markakis12}
(under the standard or uniform bidding interface):

\begin{theorem}
The Price of Anarchy is at least $\frac{2k-1}{k}$ for the \upa\ with submodular
valuation functions.
\label{theorem:upa-sm-lb}
\end{theorem}

\begin{proof}
Consider $2$ bidders with submodular valuation functions, as follows:

\[
v_1(x)=\left\{
\begin{array}{ll}
k, & \mbox{if $x\geq 1$}\\
0, & \mbox{otherwise}
\end{array}
\right.
\mbox{\quad and\quad}
v_2(x) = x
\]

\noindent The socially optimal allocation achieves welfare $2k-1$, by granting
one unit to bidder $1$ and $k-1$ units to bidder $2$. For a pure Nash
equilibrium consider a uniform bid $\langle b_1=1,k\rangle$ of bidder $1$,
claiming at most $k$ units, and $b_2=0$ for bidder $2$. Clearly bidder $1$
obtains maximum utility, equal to $k$, in this configuration; for every
additional unit that bidder $2$ could obtain by deviating appropriately, he
would gain additional value equal to $1$ and increase his payment by exactly
$1$. Thus, bidder $2$ also cannot improve his utility. The social welfare in
this pure Nash equilibrium configuration is $k$. This implies a Price of Anarchy
that approaches $2$ with the growth of $k$.
\end{proof}

For both bidding interfaces, upper bounds for the inefficiency of pure Nash
equilibria will, emerge from our results for the -- more general -- incomplete
information model (Section~\ref{section:sm}), in combination with the following
lemma:

\begin{lemma}
%For every non-overbidding pure Nash equilibrium $\vec{b}$ of the Uniform Price
%Auction with submodular bidders, there exists a pure Nash equilibrium
%$\vec{\bar{b}}$ of the auction with uniform bidding, such that
%$SW(\vec{\bar{b}})=SW(\vec{b})$. Moreover, 
For arbitrary valuation functions, every no-overbidding pure Nash equilibrium of
the auction with uniform bidding is also a pure Nash equilibrium of the auction
under standard bidding, subject to no-overbidding deviations.
\label{lemma:ub-sb} 
\end{lemma}

\begin{comment}
\begin{lemma}
For every pure Nash equilibrium $\vec{b}$ in undominated strategies, of the
Uniform Price Auction with submodular bidders, there exists a pure Nash
equilibrium $\vec{\bar{b}}$ of the auction with uniform bidding, such that
$SW(\vec{\bar{b}})=SW(\vec{b})$. Moreover, for arbitrary valuation functions,
every no-overbidding pure Nash equilibrium of the auction with uniform bidding
is also a pure Nash equilibrium of the auction under standard bidding, subject
to no-overbidding deviations.  \label{lemma:ub-sb} 
\end{lemma}
\end{comment}

\begin{proof}
%We may assume that $\vec{b}$ has the form prescribed by Proposition 2
%in~\cite{Markakis12}; i.e., for every winning bidder $i$, $b_i(j)=m_i(j)$, if
%$j\leq x_i$ and $b_i(j)=0$ for $j>x_i$. For every losing bidder,
%$\vec{b}_i=(m_i(1),0,\dots,0)$. Then, the uniform price $p(\vec{b})$ is equal
%to $m_i(1)=v_i(1)$ for some $i$, or $0$. We convert $\vec{b}$ into a uniform
%bidding profile $\vec{\bar{b}}\equiv\langle \bar{b}_i,q_i\rangle$ as follows:
%set $\bar{b}_i=v_i(x_i(\vec{b}))/x_i(\vec{b})$ and $q_i=x_i(\vec{b})$ if
%$x_i(\vec{b}) \geq 1$;  otherwise set $\bar{b}_i = m_i(1)$ and $q_i = 1$. Then,
%observe that $p(\vec{\bar{b}})=p(\vec{b})$. Because for every {\em winning}
%bidder $i$ under $\vec{b}$, we have $b_i(x_i(\vec{b}))=m_i(x_i(\vec{b}))\geq
%p(\vec{b})$, it is also $\bar{b}_i\geq p(\vec{b})=p(\vec{\bar{b}})$. Thus,
%$x_i(\vec{\bar{b}})=x_i(\vec{b})$ for all $i$ and
%$SW(\vec{\bar{b}})=SW(\vec{b})$; moreover, $\vec{\bar{b}}$ is a pure Nash
%equilibrium, because $\vec{b}$ is one, and the uniform price remains unchanged.
%
Let $\vec{\bar{b}}$ be a no-overbidding pure Nash equilibrium of the Uniform
Price Auction under uniform bidding, for arbitrary valuation functions. We argue
that it remains a pure Nash equilibrium under the standard bidding. If any
losing bidder $i$ has incentive to deviate using a {\em no-overbidding} standard
bid $\vec{b}_i$, so as to win at least one additional unit, he may do so also by
using a uniform bid $\langle b_i(1),1\rangle$, a contradiction. If a winning
bidder has an incentive to deviate under $\vec{\bar{b}}$ using a {\em
no-overbidding} standard bid $\vec{b}_i$, having $q_i$ non-zero components
(marginal bids), then he may as well do so with a uniform bid
$\langle\sum_{j\leq q_i}b_i(j)/q_i,q_i\rangle$; indeed, because $\vec{b}_i$ is
submodular, $\sum_{j\leq q_i}b_i(j)/q_i\geq b_i(q_i)$. Thus, the assumed uniform
bid should grant $i$ at least as many units as $\vec{b}_i$.
\end{proof}

The above Lemma, along with statement {\em(ii)} of Theorem~\ref{thm:main-sm}
from Section~\ref{section:sm}, leads to the following:

\begin{corollary}
The Price of Anarchy for pure Nash equilibria of the Uniform Price Auction with
submodular bidders under the standard or the uniform bidding interface is at
most $3.1462$.
\end{corollary}

%The lower bound follows directly from the first statement of
%Lemma~\ref{lemma:ub-sb} and the lower bound on the inefficiency of pure Nash
%equilibria in undominated strategies of~\cite{Markakis12} (Theorem 2). The
%upper bound follows from the second statement of Lemma~\ref{lemma:ub-sb} and
%statement {\em(ii)} of Theorem~\ref{thm:main-sm} (Section~\ref{section:sm}),
%which applies for mixed Bayes-Nash equilibria with no-overbidding, thus, also
%for pure Nash. 

For wider valuation function classes than submodular, we do not know whether the
Uniform Price Auction generally has pure Nash equilibria. To the best of our
knowledge, and as mentioned by Milgrom in~\cite{Milgrom04}, the standard
multi-unit auction formats have not been studied before, for any larger class of
valuation functions. In Section~\ref{section:sm} we give upper bounds of $4$ and
$6.2924$ on the Price of Anarchy of {\em mixed Bayes-Nash} equilibria for {\em
subadditive} valuation functions, under the {\em standard and uniform bidding
interfaces}, respectively. By Lemma~\ref{lemma:ub-sb} however, the former bound
of $4$ is valid also for uniform bidding, in the case of {\em pure Nash}
equilibria.

\begin{corollary}
The Price of Anarchy of pure Nash equilibria of the Uniform Price Auction with
subadditive bidders under the standard or the uniform bidding interface is at
most $4$.
\end{corollary}

\noindent The upper bound of this corollary follows by the second statement of
Lemma~\ref{lemma:ub-sb} and by Theorem~\ref{theorem:standard-sa-bounds},
discussed in Section~\ref{section:sm}. The lower bound is shown explicitly in
Section~\ref{subsection:lower-bounds}.

\subsection{Discriminatory Pricing}

The Discriminatory Auction, as a generalization of the First-Price Auction, is
not guaranteed to possess pure Nash equilibria; their existence depends heavily
on the choice of a tie-breaking rule, as is often the case for games where
players have a continuum of strategies. For example, consider the first-price
auction where the valuation of bidder $1$ is $1$, the valuation of bidder $2$ is
$\epsilon < 1$, and the tie-breaking rule always favors bidder $2$. Obviously
there can be no equilibrium where bidder $2$ bids above $1$. Furthermore, if
bidder $2$ bids some value $\delta < 1$, then bidder $1$ does not have a best
response in $(\delta, 1)$; no matter what he bids to win the unit, he always has
an incentive to lower his bid while still being above $\delta$. Hence there is
no Nash equilibrium for this auction. We exhibit here that, as with first-price
auctions, an appropriate choice of a tie-breaking rule induces a {\em uniform
bidding} profile that is a pure Nash equilibrium for the auction, even subject
to deviations under the standard bidding interface. Additionally, we show that
we can always obtain close approximations to pure Nash equilibria, i.e., pure
$\epsilon$-equilibria, for every possible tie breaking rule.

\begin{proposition}
\label{proposition:dpa-pne}
\mbox{} \\
\begin{compactenum}[(i)]
%%%
\item For every Discriminatory Auction there is a tie-breaking rule inducing a
uniform bidding profile that is a pure Nash equilibrium under that tie-breaking
rule.
\label{stmt:netiebreaking} 
%%%
\item For every $\epsilon > 0$, the Discriminatory Auction has a pure
$\epsilon$-equilibrium.
\label{stmt:epsilon-pne}
\end{compactenum}
%%%
\end{proposition}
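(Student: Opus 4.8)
The plan is to prove the two parts of Proposition~\ref{proposition:dpa-pne} separately, both by explicit constructions.

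\medskip\noindent\textbf{Part (i): A tie-breaking rule inducing a pure Nash equilibrium.}
First I would fix any socially optimal allocation $\vec{x}^{\vec{v}}$ and construct a candidate uniform bidding profile $\vec{\bar{b}}$ from it: each bidder $i$ with $x_i^{\vec{v}} \ge 1$ bids uniformly $\bar{b}_i = v_i(x_i^{\vec{v}})/x_i^{\vec{v}}$ for $q_i = x_i^{\vec{v}}$ units, while a bidder receiving nothing bids $\bar{b}_i = v_i(1)$ for $q_i = 1$ unit (or $0$ if even that is not helpful). The key point is that I get to design the tie-breaking rule: I declare that whenever there are ties among marginal bids at the threshold, the units are awarded precisely so as to reproduce $\vec{x}^{\vec{v}}$ (and, more generally, I fix a global priority ordering on bidders consistent with this). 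With this tie-breaking in place, under $\vec{\bar{b}}$ bidder $i$ wins exactly $x_i^{\vec{v}}$ units at per-unit price $v_i(x_i^{\vec{v}})/x_i^{\vec{v}}$, so his utility is exactly $0$. I then need to verify that no deviation is profitable. A bidder can only gain by either (a) winning more units, which requires outbidding someone on a marginal bid he currently loses; since all winning bidders already extract full value per unit, to win an extra unit bidder $i$ would have to bid at least the displaced bidder's per-unit price $v_{i'}(x_{i'}^{\vec{v}})/x_{i'}^{\vec{v}}$ on that marginal, and optimality of $\vec{x}^{\vec{v}}$ together with monotonicity/averaging properties of the valuations will show this cannot yield positive utility; or (b) winning the same number of units more cheaply, which is impossible since the other bidders' bids are fixed and to retain $x_i^{\vec{v}}$ units bidder $i$ must still outbid (or tie, losing by priority) the same competitors. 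The main technical content is inequality (a): I would argue that if $i$ steals a unit from $i'$ (reducing $i'$ to a worse allocation), the marginal value $v_i(x_i^{\vec{v}}+1)-v_i(x_i^{\vec{v}})$ that $i$ gains is at most the per-unit price he must pay, because otherwise swapping that unit from $i'$ to $i$ would already improve social welfare, contradicting optimality. Similarly for multi-unit deviations one uses subadditivity/monotonicity to reduce to single-unit exchanges. Hence $\vec{\bar{b}}$ is a pure Nash equilibrium under the constructed tie-breaking rule, and since $\vec{\bar{b}}$ is a uniform bid vector the statement (including robustness to standard-bidding deviations, which are just a richer deviation set handled by the same argument) follows.

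\medskip\noindent\textbf{Part (ii): Existence of pure $\epsilon$-equilibria for every tie-breaking rule.}
For this part the tie-breaking rule is adversarial, so I would start from the same optimal-welfare uniform profile $\vec{\bar{b}}$ but perturb it. The idea, standard for first-price-type auctions, is to have each winning bidder shade his bid slightly \emph{upward} by a tiny amount $\delta$, i.e., bid $v_i(x_i^{\vec{v}})/x_i^{\vec{v}} + \delta_i$ per unit, with the $\delta_i$ chosen distinct and strictly positive so that all relevant marginal bids become strictly separated and no ties occur at all; then the tie-breaking rule is irrelevant and the allocation is forced to be $\vec{x}^{\vec{v}}$. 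Each bidder now has utility at least $-x_i^{\vec{v}}\delta_i \ge -k\delta$, where $\delta = \max_i \delta_i$. A deviation can improve $i$'s utility by at most: winning more units, which as in Part~(i) is bounded above by a quantity that goes to $0$ as the perturbations shrink (strictly, by optimality, the gross gain is non-positive up to $O(\delta)$ slack), or recouping the $\le k\delta$ he overpays. Choosing all $\delta_i \le \epsilon/k$ makes the total possible improvement at most $\epsilon$, so the profile is a pure $\epsilon$-equilibrium. I should double check that such a choice of distinct positive $\delta_i$ with $\delta_i \le \epsilon/k$ is always possible (it is, trivially) and that the no-overbidding-style bookkeeping — here we are slightly overbidding, which is allowed in the Discriminatory Auction since there is no no-overbidding hypothesis in this format — does not cause a bidder's utility to be unboundedly negative; it cannot, since the per-unit overpayment is $\delta_i$ on at most $k$ units.

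\medskip\noindent\textbf{Expected main obstacle.}
The delicate step in both parts is quantifying the maximum gain from a deviation that \emph{wins additional units}. One must rule out that a bidder profitably displaces several others at once via a cleverly chosen non-uniform marginal bid vector. I expect to handle this by an exchange argument: any profitable unit-stealing deviation would, by the averaging inequalities for submodular/subadditive functions recalled in Section~\ref{section:definitions} ($f(x)/x \ge f(y)/y$ for submodular, $f(x)/x \ge f(y)/(x+y)$ for subadditive), and by the optimality of $\vec{x}^{\vec{v}}$, translate into a reallocation with strictly higher social welfare — a contradiction. Making this exchange argument fully rigorous for arbitrary (not just submodular) valuations, which is what the proposition implicitly allows, is the part that needs the most care, since $\vec{x}^{\vec{v}}$ is then only a globally optimal integer allocation and one must argue at the level of moving whole blocks of units between bidders rather than a single unit at a time.
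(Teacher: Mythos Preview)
Your construction in Part~(i) does not give a Nash equilibrium, and the flaw propagates to Part~(ii). The issue is case~(b): you claim a winning bidder cannot retain his $x_i^{\vec v}$ units at a lower price because ``to retain $x_i^{\vec v}$ units bidder $i$ must still outbid \dots\ the same competitors.'' But the price needed to outbid the competitors is determined by \emph{their} bids, not by $v_i(x_i^{\vec v})/x_i^{\vec v}$. Take $k=2$, two bidders, $v_1(1)=10$, $v_1(2)=18$, $v_2(1)=3$. In your profile bidder~1 bids $9$ per unit on two units and bidder~2 bids $3$ on one unit; bidder~1 wins both at total cost $18$, utility $0$. Deviating to $4$ per unit, bidder~1 still wins both and now has utility $10$. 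So your profile is not an equilibrium, and no tie-breaking rule can fix this since there are no ties. The same undercutting deviation breaks your $\epsilon$-equilibrium in Part~(ii): the overpayment a winner can recoup is not $O(\delta)$ but the full gap between his bid and the next-highest competing bid.

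The paper's approach avoids this by having \emph{every} bidder bid the \emph{same} value $\tilde m(k)$ (the $k$-th largest marginal value over all bidders) on \emph{all} $k$ units, and then choosing a tie-breaking rule that allocates optimally. Now every winning marginal bid is exactly $\tilde m(k)$, which equals the highest losing marginal bid, so no winner can lower a bid without losing a unit; and no bidder wants an additional unit because his next marginal value is at most $\tilde m(k)$. For Part~(ii), the paper perturbs this common-bid profile by adding $\epsilon/k$ only to each bidder's first $x_i^{\vec v^*}$ bids, which strictly separates the $k$ winning bids from the losing ones (all still equal to $\tilde m(k)$), making the tie-breaking rule irrelevant while keeping every possible gain from deviation bounded by $\epsilon$.
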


\begin{proof}
(i) Let $\hat{\vec{m}}$ be the $nk$-dimensional vector obtained by appending all
vectors $\vec{m}_i, i \in [n]$, and let $\tilde{\vec{m}}$ be the vector obtained
by non-increasingly ordering $\hat{\vec{m}}$. We show that the set of bid
vectors $b$ where every bidder sets all his marginal bids to $\tilde{m}(k)$ is a
pure equilibrium, if ties are broken according to any tie-breaking rule that
satisfies $m_i(x_i(\vec{b})) \geq \tilde{m}(k)$.

Assume without loss of generality that there are at least $2$ players.  Let
$\vec{b}$ be the bidding profile where all players bid $\tilde{m}(k)$ on all
items, and break ties in any way that satisfies that $m_i(x_i(\vec{b})) \geq
\tilde{m}(k)$. To see why this is a pure equilibrium, consider the player
deviating to bid vector $\vec{b}_i'$. Note that $\vec{b}_i' - \vec{b}_i$ is
non-increasing. Define $\ell$ as the lowest index such that $(\vec{b}_i -
\vec{b}_i')(\ell)$ is negative (and define $\ell$ as $k+1$ if there is no such
index).  In case $\ell \leq x_i(\vec{b})$, then the utility of player $i$ will
certainly not increase by deviating to $\vec{b}_i'$, as he will lose utility
from the fact that $x_i(\vec{b}) - \ell$ less items are now allocated to him
under $(\vec{b}_i', \vec{b}_{-i})$, compared to $\vec{b}$. As player $i$ used to
derive non-negative utility from these items under $\vec{b}$, this removal of
items accounts for a non-negative decrease in utility. Moreover, player $i$
increases his bid on his first $\ell$ items, so this accounts for a non-negative
decrease in utility as well. His total utility will therefore decrease in this
case.

In case $\ell > x_i(\vec{b})$, we are in a situation where player $i$ increases
his bids (under $\vec{b}_i'$, compared to $\vec{b}_i$) on some of the first
$x_i(\vec{b})$ items by at least $0$, so he will win these items under
$(\vec{b}_i', \vec{b}_{-i})$ but spend more money on it, leading to a decrease
in utility. On any remaining items that player $i$ wins under $(\vec{b}_i',
\vec{b}_{-i})$, he overbids. This also accounts for a non-negative decrease in
utility. The total decrease in utility is thus non-negative in this case.

\medskip
(ii) Let $\vec{b}^*$ be a social welfare maximizing bidding profile. Consider
the uniform bidding profile $\vec{\bar{b}}$ as defined in
Proposition~\ref{proposition:dpa-pne}~(i). Let $(\xi_i)_{i \in N}$ be a set of
vectors that indicate the optimal allocation $(x_i(\vec{b}^*))_{i \in N}$, i.e.,
$\xi_i$ is the $(0,1)$-vector of which the first $x_i(\vec{b}^*)$ entries are
$1$, and the remaining entries are $0$. We show that $\tilde{\vec{b}} = \vec{b}
+ \epsilon\xi/k$ is a pure $\epsilon$-equilibrium. The reasoning we apply is
largely analogous to the proof of Proposition~\ref{proposition:dpa-pne}~(i). 

First of all, observe that there are no ties that need to be broken under
$\tilde{\vec{b}}$, and that the allocation $(\,x_i(\tilde{\vec{b}})\,)_{i \in
N}$ satisfies $m_i(x_i(\vec{b})) \geq \tilde{m}(k)$.

Consider the player deviating to bid vector $\vec{b}_i'$. If player $i$ wins
less items under $(\vec{b}_i', \tilde{\vec{b}}_{-i})$ than under $\vec{b}$, he
will experience an increase in utility of at most $(x_i(\tilde{\vec{b}}) -
x_i(\vec{b}'))\epsilon/k$ due to losing items, because the utility that player
$i$ derived under $\vec{b}'$ from each of these lost items was at least
$-\epsilon/k$. On the remaining items that player $i$ still wins, the player
increases his bid by at least $-\epsilon/k$, and this accounts for an increase
in utility of at most $x_i(\vec{b}')\epsilon/k$. The total increase in utility
is thus at most $x_i(\vec{b}')\epsilon/k \leq \epsilon$.

If player $i$ wins at least as much items under
$(\vec{b}_i',\tilde{\vec{b}}_{-i})$ than under $\tilde{\vec{b}}$, the player
will have increased his bids on the first $x_i(\tilde{\vec{b}})$ items by at
least $-\epsilon/k$, and by at least $0$ on the remaining items. For these
remaining items, the player experiences non-positive utility under $(\vec{b}_i',
\vec{b}_{-i})$, whereas he experienced $0$ utility under $\tilde{\vec{b}}$.
Therefore, the total increase in utility is in this case at most
$x_i(\tilde{\vec{b}})\epsilon/k \leq \epsilon$.
\end{proof}

We show next that, whenever pure Nash equilibria exist, they are socially
optimal, even with arbitrary valuation functions. This is in analogy with other
results on mechanisms with first price rules \cite{Hassidim11}.

\begin{theorem}
\label{thm:pne_discriminatory}
Pure Nash equilibria of the Discriminatory Auction (with the standard or the
uniform bidding interface) are always efficient, even for bidders with arbitrary
valuation functions.
\end{theorem}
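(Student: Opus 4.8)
The plan is to follow the standard template for first‑price–type mechanisms (as in \cite{Hassidim11}): fix a pure Nash equilibrium $\vec{b}$ with allocation $\vec{x} = \vec{x}(\vec{b})$ and a welfare‑optimal allocation $\vec{x}^{\vec{v}}$, and show $SW(\vec{v},\vec{x}) \ge SW(\vec{v},\vec{x}^{\vec{v}})$ by exhibiting for each bidder a deviation whose utility lower‑bounds $v_i(x_i^{\vec{v}})$ minus a ``price'' term, then summing over bidders so that the price terms collapse exactly to the total revenue. The reverse inequality $SW(\vec{v},\vec{x}) \le SW(\vec{v},\vec{x}^{\vec{v}})$ is immediate from optimality of $\vec{x}^{\vec{v}}$, so equality follows.

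First I would establish a structural lemma: in any pure Nash equilibrium $\vec{b}$, writing $p = p(\vec{b})$ for the highest losing marginal bid, every marginal bid is at most $p$, and every winning bidder $i$ satisfies $b_i(1) = \dots = b_i(x_i(\vec{b})) = p$; consequently $\sum_{i\in[n]} P_i(\vec{b}) = kp$. This is proved by a single‑bidder deviation: if some winning bidder $i$ had $\sum_{j\le x_i(\vec{b})} b_i(j) > x_i(\vec{b})\, p$, he could switch to bidding $p+\epsilon$ on his first $x_i(\vec{b})$ units (a valid bid in both interfaces, in fact a uniform bid). Since every marginal bid strictly above $p$ is necessarily winning, at most $k - x_i(\vec{b})$ of the other bidders' marginal bids exceed $p+\epsilon$, so $i$'s $x_i(\vec{b})$ bids at $p+\epsilon$ still land among the top $k$; thus $i$ keeps \emph{exactly} $x_i(\vec{b})$ units but pays only $x_i(\vec{b})(p+\epsilon)$, which for small $\epsilon$ strictly raises his utility, contradicting equilibrium. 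Ties (unavoidable for the existence of pure Nash equilibria here) are handled by choosing $\epsilon$ so that $p+\epsilon$ differs from the finitely many bid values occurring in $\vec{b}$.

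Next, with the structural lemma in hand comes the key deviation: for each bidder $i$, deviating to the uniform bid $p+\epsilon$ on his first $x_i^{\vec{v}}$ units wins exactly $x_i^{\vec{v}}$ units — because every marginal bid in $\vec{b}_{-i}$ is at most $p < p+\epsilon$ — at cost $x_i^{\vec{v}}(p+\epsilon)$, so equilibrium gives $u_i^{v_i}(\vec{b}) \ge v_i(x_i^{\vec{v}}) - x_i^{\vec{v}}\, p$ (taking $\epsilon \downarrow 0$). Summing over $i$, and assuming without loss of generality that $\sum_i x_i^{\vec{v}} = k$ — valuations are non‑decreasing, so an optimal allocation can be taken to hand out all $k$ units — we get $\sum_i u_i^{v_i}(\vec{b}) \ge SW(\vec{v},\vec{x}^{\vec{v}}) - kp$. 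On the other hand $\sum_i u_i^{v_i}(\vec{b}) = SW(\vec{v},\vec{x}) - \sum_i P_i(\vec{b}) = SW(\vec{v},\vec{x}) - kp$ by the structural lemma, whence $SW(\vec{v},\vec{x}) \ge SW(\vec{v},\vec{x}^{\vec{v}})$, as desired. The argument is identical for the standard and the uniform bidding interface, since all deviations used are uniform bids.

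I expect the main obstacle to be the structural lemma — specifically, verifying that ``shading one's winning bids down to $p+\epsilon$'' keeps a bidder's allocation \emph{exactly} at $x_i(\vec{b})$, which is what makes the utility comparison clean. The subtlety is that other bidders may hold winning bids strictly above $p$, so one must argue through the count (at most $k - x_i(\vec{b})$ such bids) that bidding $p+\epsilon$ on $x_i(\vec{b})$ units neither loses nor gains units. A naive alternative that lets each bidder deviate directly to win $x_i^{\vec{v}}$ units with the cheapest uniform bid needed against $\vec{b}_{-i}$ fails: because marginal bids must be non‑increasing, winning $\ell$ units costs $\ell$ times the $\ell$‑th competing threshold, and the sum of these costs over bidders can strictly exceed the equilibrium revenue — it is precisely the equilibrium structure (all marginal bids $\le p$, all winning bids $= p$) that rescues the accounting.
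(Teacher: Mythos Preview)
Your proof is correct, and it shares its first step with the paper --- both establish that in any pure Nash equilibrium all winning marginal bids equal the highest losing bid $p$ (the paper calls it $d$), so total revenue is $kp$. From there, however, the two arguments diverge. The paper extracts from the equilibrium two further marginal-value conditions (for each bidder $i$, the last $\ell$ units won are worth at least $\ell d$, and the next $\ell$ units not won are worth at most $\ell d$), and then does a direct exchange-style welfare comparison: splitting bidders into those who gain versus lose units when passing from $\vec{x}(\vec{b})$ to $\vec{x}^{\vec{v}}$, these marginal bounds make the welfare difference telescope to zero. You instead run a clean smoothness-type deviation: once all bids are known to be $\le p$, each bidder can grab his optimal bundle at price $p+\epsilon$ per unit, and summing the equilibrium inequalities collapses the price terms against the $kp$ revenue. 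Your route is shorter and more in the spirit of the rest of the paper's Bayesian analysis; the paper's route yields slightly more structural information (the marginal-value inequalities (ii) and (iii) of its Lemma~\ref{lem:pne_properties}) as a byproduct, at the cost of a longer case analysis.
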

 
The proof of Theorem \ref{thm:pne_discriminatory} is based on the following
Lemma, which captures the main properties of pure Nash equilibria. Notice that,
the first in the Lemma below, essentially states that every pure Nash
equilibrium of the auction occurs at a {\em uniform bidding} profile. Thus, the
theorem is also valid for the uniform bidding interface.

\begin{lemma}\label{lem:pne_properties}
Let $\vec{b}$ be a pure Nash equilibrium in a given Discriminatory Auction where
the bidders have general valuation functions. Let $d = \max \{b_i(j) : i \in
[n], j \in [k], j > x_i(\vec{b})\}$. Then:
%%%
\begin{enumerate}[(i)]
%%%
\item For any bidder $i$ who wins at least one item under $\vec{b}$, and for all
$j \in [x_i(\vec{b})]$: $b_i(j) = d$,
%%%
\item
%%%
$\displaystyle\ell d \leq \sum_{j = x_i(\vec{b}) - \ell + 1}^{x_i(\vec{b})} m_i(j),\,$
%%%
for all $i \in [n]$ and $\ell \in [x_i(\vec{b})]$,
%%%
\item
%%%
$\displaystyle\sum_{j = x_i(\vec{b}) + 1}^{x_i(\vec{b}) + \ell} m_i(j) \leq \ell d,\,$
%%%
for all $i \in [n]$ and $\ell \in [k - x_i(\vec{b})]$. 
%%%
\end{enumerate}
\end{lemma}
\begin{proof}
Let $c$ be the smallest value in $\{b_i(j) : i \in [n], j \in [k], j \leq
x_i(\vec{b})\}$, i.e., the smallest winning marginal bid.  Observe that $c=d$:
Otherwise, a player $i$ that bids $b_i(x_i(\vec{b})) = c$ could change
$b_i(x_i(\vec{b}))$ to a lower bid in order to obtain more utility.  For the
same reasons, we conclude that any winning marginal bid $b_i(j)$ is equal to the
largest marginal bid that is smaller than $b_i(j)$. It follows inductively that
all winning marginal bids are equal to $d$. This establishes point $(1)$ of the
claim.

Suppose that for some $i \in [n]$ and $\ell \in [x_i(\vec{b})]$, it holds that:
%%%
$$
\ell d
= 
\sum_{j = x_i(\vec{b}) - \ell + 1}^{x_i(\vec{b})} b_i(j) 
> 
\sum_{j = x_i(\vec{b}) - \ell + 1}^{x_i(\vec{b})} m_i(j).
$$
%%%
Then, if player $i$ changes all marginal bids $b_i(j)$ for $j \in \{j : \ell
\leq j\}$ to $0$, he would increase his utility. This is not possible since
$\vec{b}$ is a pure equilibrium, so we conclude that for all $i \in [n]$ and
$\ell \in [x_i(\vec{b})]$, it holds that:
%%%
$$
\sum_{j = x_i(\vec{b}) - \ell + 1}^{x_i(\vec{b})} b_i(j) 
\leq 
\sum_{j = x_i(\vec{b}) - \ell + 1}^{x_i(\vec{b})} m_i(j).
$$
%%%
This establishes point $(ii)$ of the claim. 

Assume that there is $i \in [n]$ and $\ell \in [k - x_i(\vec{b})]$ such that:
%%%
$$
\sum_{j = x_i(\vec{b}) + 1}^{x_i(\vec{b}) + \ell} m_i(j) > \ell d.
$$
%%%
Then, if player $i$ would change his marginal bids $b_i(j), j \in \{j : 1 \leq j \leq
x_i(\vec{b}) + \ell\}$ to $d + \epsilon$ for some $\epsilon > 0$, then player
$i$'s utility {\em increases by:}
%%%
$$
\sum_{j = x_i(\vec{b})+1}^{x_i(\vec{b})+\ell} m_i(j)
-\ell(d + \epsilon) - x_i(\vec{b})\epsilon.
$$ 
%%%
Because $\sum_{j = x_i(\vec{b})+1}^{x_i(\vec{b})+\ell} m_i(j) - \ell d$ is
positive, this total increase is positive when we take for $\epsilon$ a
sufficiently small value.  This is in contradiction with the fact that $\vec{b}$
is a pure equilibrium, and this establishes point $(iii)$ of the claim.
\end{proof}

\begin{proofof}{Theorem \ref{thm:pne_discriminatory}}
Let $\vec{b}^*$ be a bid vector that attains the optimum social welfare. Denote
by $A$ the set of bidders that get more items under $\vec{b}$ than under
$\vec{b}^*$. For a bidder $i \in A$, define $\ell_i$ as the number of extra
items that $i$ gets under $\vec{b}$, when compared to $\vec{b}^*$; i.e., $\ell_i
= x_i(\vec{b}) - x_i(\vec{b}^*)$. Denote by $B$ the set of bidders that get more
items under $\vec{b}^*$ than under $\vec{b}$. For a bidder $i \in B$, define
$\ell_i$ as the number of extra items that $i$ gets under $\vec{b}^*$, when
compared to $\vec{b}$; i.e., $\ell_i = x_i(\vec{b}^*) - x_i(\vec{b})$. Then,
%%%
\begin{align*}
& \sum_{i = 1}^n v_i(x_i(\vec{b})) - \sum_{i = 1}^n v_i(x_i(\vec{b}^*))
= \sum_{i = 1}^n \left(\sum_{j = 1}^{x_i(\vec{b})} m_i(j) - \sum_{j = 1}^{x_i(\vec{b}^*)} m_i(j) \right)\nonumber\\
&\qquad = \sum_{i \in A} \sum_{j = x_i(\vec{b}) - \ell_i + 1}^{x_i(\vec{b})} m_i(j) - \sum_{i \in B} \sum_{j = x_i(\vec{b}) + 1}^{x_i(\vec{b}) + \ell_i} m_i(j)
\ge \sum_{i \in A} \ell_i d - \sum_{i \in B} \ell_i d =0.
\end{align*}

\iffalse
\[
\begin{array}{llcl}
&\displaystyle
\sum_{i = 1}^n v_i(x_i(\vec{b})) - \sum_{i = 1}^n v_i(x_i(\vec{b}^*))
&=&\displaystyle
\sum_{i = 1}^n \left(\sum_{j = 1}^{x_i(\vec{b})} m_i(j) - \sum_{j = 1}^{x_i(\vec{b}^*)} m_i(j) \right)\nonumber\\
=&\displaystyle\sum_{i \in A} \sum_{j = x_i(\vec{b}) - \ell_i + 1}^{x_i(\vec{b})} m_i(j) - \sum_{i \in B} \sum_{j = x_i(\vec{b}) + 1}^{x_i(\vec{b}) + \ell_i} m_i(j)
&\geq&\displaystyle
\sum_{i \in A} \ell_i d - \sum_{i \in B} \ell_i d =0
\end{array}
\]
\fi
%%%
The inequality in the derivation above follows from points $(ii)$ and $(iii)$ of
Lemma \ref{lem:pne_properties}, and the final equality holds because $\sum_{i
\in A} \ell_i = \sum_{i \in B} \ell_i$.  Thus, the social welfare of the pure
equilibrium $\vec{b}$ is optimal.
\end{proofof}

\section{Bayes-Nash Inefficiency}
\label{section:sm}

\otrem{In this section we develop our main results, concerning the inefficiency
of (mixed) Bayes-Nash equilibria.}
%Our main results concern the inefficiency of Bayes-Nash equilibria.  For a
%discussion on the properties of pure Nash equilibria, we refer the reader to
%{\bf Appendix A}. 
We derive bounds on the (mixed) Bayesian Price of Anarchy for the Discriminatory
and the Uniform Price Auctions with submodular and subadditive valuation
functions. For the latter class \otrem{of valuation functions} our bounds are
the first results to appear in the literature of standard multi-unit auctions
(see also the commentary in~\cite[Chapter 7]{Milgrom04}). 
%Some proofs of this section are deferred to \textbf{Appendix B}. 

\begin{theorem}\label{thm:main-sm}
The \bpoa\ (under the standard or uniform bidding format) is at most 
%%%
\begin{compactenum}[(i)]
%%%
\item $\frac{e}{e-1}$ and $\frac{2e}{e-1}$ for the \dpa\ with submodular and
subadditive valuation functions, respectively,
%%%
\item  $|{\cal W}_{-1}(-1/e^2)|\approx 3.1462 < \frac{2e}{e-1}$ and $2|{\cal
W}_{-1}(-1/e^2)| \approx 6.2924 < \frac{4e}{e-1}$ for the \upa\ with submodular
and subadditive valuation functions, respectively, ${\cal W}_{-1}$ being the
lower branch of the Lambert
${\cal W}$ function.
%%%
\end{compactenum}
%%%
\end{theorem}

This theorem improves on the currently best known upper bounds of
$\frac{2e}{e-1}$ and $\frac{4e}{e-1}$ for the \dpa\ and the \upa, respectively,
with submodular valuation functions due to Syrgkanis and Tardos \cite{ST13}. For
the \upa, this further reduces the gap from the known lower bound of
$\frac{e}{e-1}$ \cite{Markakis12}. Syrgkanis and Tardos \cite{ST13} obtained
their bounds through an adaptation of the \emph{smoothness framework} for games
with incomplete information (\cite{Roughgarden12,Syrgkanis12a}). The bounds of
Theorem~\ref{thm:main-sm} and some additional results can also be obtained
through this framework. We comment on this in more detail in
Section~\ref{sec:z-smoothness}.

For subadditive valuation functions and the standard bidding format, however,
better bounds can be obtained by adapting a technique recently introduced by
Feldman {\em et al.}~\cite{Feldman12}, which does not fall within the smoothness
framework. We were unable to derive these bounds via a smoothness argument and
believe that this is due to the additional flexibility provided by this
technique. 

\begin{theorem}
\label{thm:main-sa}
The \bpoa\ is at most $2$ and $4$ for the \dpa\ and the \upa, respectively, with
subadditive valuation functions under the standard bidding format.
\label{theorem:standard-sa-bounds}
\end{theorem}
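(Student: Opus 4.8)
\medskip
\noindent
The plan is to adapt the deviation--based argument of Feldman \emph{et al.}~\cite{Feldman12} for simultaneous item auctions to the multi-unit setting, treating the $k$ units as the ``items''. Fix a Bayes--Nash equilibrium $\vec{B}$; for a valuation profile $\vec{v}$ let $o_i=x_i^{\vec{v}}$ be the number of units bidder $i$ receives in the optimal allocation, so that $\sum_i o_i=k$, and (as in~\cite{Christodoulou08,ST13,Feldman12}) have each bidder's deviation be a randomized function of his own type $\vec{v}_i$ together with a fresh independent resample $\vec{w}_{-i}\sim\pi_{-i}$ of the others' types; the final expectation then goes through by the by-now-standard Bayesian bookkeeping, using that $\vec{v}_{-i}$ and $\vec{w}_{-i}$ are i.i.d. The only property of subadditivity I invoke is a random-halving bound: viewing $i$'s $o_i$ optimal units as a set and letting $R$ be a uniformly random subset of it (each unit included independently with probability $\tfrac12$), subadditivity gives $v_i(o_i)\le v_i(|R|)+v_i(o_i-|R|)$, whence by symmetry $\mathbb{E}[v_i(|R|)]\ge\tfrac12 v_i(o_i)$, where $|R|\sim\mathrm{Bin}(o_i,\tfrac12)$.

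\medskip
\noindent\emph{Discriminatory auction.} The deviation of bidder $i$ is: draw $\ell\sim\mathrm{Bin}(o_i,\tfrac12)$ and submit a flat bid marginally above $\beta_\ell$ on $\ell$ units (nothing if $\ell=0$). Because marginal bids must be non-increasing, winning $\ell$ units forces every one of the $\ell$ submitted bids to exceed the $\ell$-th lowest competing winning bid, so this flat bid is the cheapest way to obtain $\ell$ units, costs at most $\ell\beta_\ell$, and gives utility at least $v_i(\ell)-\ell\beta_\ell$. Summing the equilibrium inequalities over $i$ and taking the Bayesian expectation, the left side equals $\mathbb{E}[SW(\vec{B})]-\mathbb{E}[\mathit{Rev}]$ with $\mathit{Rev}=\sum_{\ell=1}^k\beta_\ell$ the total of the winning bids, while the right side is at least $\tfrac12\mathbb{E}[\mathrm{OPT}]-\mathbb{E}\bigl[\sum_{\ell}\ell\beta_\ell\sum_i q^{(i)}_\ell\bigr]$, with $q^{(i)}_\ell=\Pr[\mathrm{Bin}(o_i,\tfrac12)=\ell]$. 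It therefore suffices to show that the revenue dominates the expected deviation cost pointwise, i.e.\ $\sum_{\ell}\beta_\ell\bigl(1-\ell\sum_i q^{(i)}_\ell\bigr)\ge 0$; since $0\le\beta_1\le\dots\le\beta_k$, summation by parts reduces this to the purely combinatorial claim $\sum_{\ell=j}^{k}\ell\sum_i q^{(i)}_\ell\le k-j+1$ for every $j\in[k]$, which follows from $\sum_i o_i=k$ (for $j\le k/2+1$ already from $\sum_i\mathbb{E}[\mathrm{Bin}(o_i,\tfrac12)]=k/2$; for larger $j$ at most one bidder has $o_i\ge j$, and an upper-tail estimate for a single binomial finishes it). This yields $\mathbb{E}[SW(\vec{B})]\ge\tfrac12\mathbb{E}[\mathrm{OPT}]$, i.e.\ \bpoa\ $\le 2$.

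\medskip
\noindent\emph{Uniform Price Auction.} The scheme is the same binomially thinned sub-quantity deviation, but two points change. First, under no-overbidding a flat bid $b$ on $\ell$ units is admissible only if $sb\le v_i(s)$ for every prefix $s\le\ell$, i.e.\ $b\le\mu_i(\ell):=\min_{s\le\ell}v_i(s)/s$; when the relevant threshold exceeds $\mu_i(\ell)$ the bidder wins fewer than $\ell$ units, and the lost value is absorbed through the subadditive estimate $v_i(\ell)\le 2\ell\,\mu_i(\ell)$ (cover $\ell$ units by $\lceil\ell/s^\star\rceil$ blocks of the minimizing size $s^\star$), in the spirit of the restriction step of~\cite{Feldman12}. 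Second, the equilibrium revenue is now $k\cdot p(\vec{b})$, and it is no-overbidding that supplies $\sum_\ell\beta_\ell\le SW(\vec{B})$, which is what bounds the aggregate deviation cost. Carrying the extra factor $2$ through the same computation gives \bpoa\ $\le 4$.

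\medskip
\noindent
The main obstacle is the \upa. In the \dpa\ the deviation cost cancels cleanly against the equilibrium revenue $\sum_\ell\beta_\ell$, whereas for uniform pricing the revenue $k\cdot p(\vec{b})$ is too small to absorb it, so one must instead charge each bidder's value shortfall against quantities that are bounded individually --- the high winning bids of the \emph{other} bidders, which no-overbidding bounds by their valuations --- while ensuring that these charges do not overlap across bidders. Arranging this cleanly is precisely the extra flexibility the technique of~\cite{Feldman12} offers beyond the smoothness framework of~\cite{ST13}, and is the reason (as the authors observe) a plain smoothness argument, which would compare a single deviation only against the social welfare, does not seem to reach these constants.
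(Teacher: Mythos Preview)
Your argument has a genuine gap at the very first step of the \dpa\ analysis. You write that bidder $i$'s deviation is to ``draw $\ell\sim\mathrm{Bin}(o_i,\tfrac12)$ and submit a flat bid marginally above $\beta_\ell$ on $\ell$ units'', and that this ``gives utility at least $v_i(\ell)-\ell\beta_\ell$''. But $\beta_\ell=\beta_\ell(\vec{b}_{-i})$ is a function of the \emph{realized} competing profile; a Bayes--Nash deviation may depend only on $\vec{v}_i$ and (through the resample) on the \emph{distribution} $\mathcal{P}_{-i}$, not on the realization. If instead you mean $\beta_\ell$ computed from the resampled profile $\vec{w}_{-i}$, then against the actual opponents $\vec{v}_{-i}$ the number of units won is \emph{not} $\ell$ in general: it is whatever quantity the flat level $\beta_\ell(\vec{w}_{-i})$ happens to clear against $\beta_\cdot(\vec{v}_{-i})$. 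Consequently the utility bound $v_i(\ell)-\ell\beta_\ell$ fails, and your random--halving identity $\mathbb{E}[v_i(\ell)]\ge\tfrac12 v_i(o_i)$ --- while correct in isolation --- is disconnected from the actual expected value $\mathbb{E}[v_i(x_i(\vec{b}'_i,\vec{b}_{-i}))]$ that appears in the equilibrium inequality. The i.i.d.\ swap does not rescue this: after swapping $\vec{v}_{-i}\leftrightarrow\vec{w}_{-i}$ the deviation bids $\beta_\ell(\vec{v}_{-i})$ but now faces $\vec{w}_{-i}$, so the mismatch persists. The same defect propagates into your \upa\ argument.

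The paper's proof uses a different (and, crucially, \emph{pointwise}) symmetry. For the \dpa, bidder $i$'s deviation is: sample an independent copy $\beta\sim\mathcal{D}$ of the opponents' $k$ winning bids, zero out the top $k-o_i$ entries to obtain $\tilde\beta$, and bid $\tilde\beta$ (plus a vanishing $\epsilon$). The payment is always at most $\sum_{j\le o_i}\beta_j$, and against the actual opponents $\gamma\sim\mathcal{D}$ one wins $x_i(\tilde\beta,\gamma)$ units. Since $\beta,\gamma$ are i.i.d., $\mathbb{E}_{\beta,\gamma}[v_i(x_i(\tilde\beta,\gamma))]=\tfrac12\,\mathbb{E}_{\beta,\gamma}\bigl[v_i(x_i(\tilde\beta,\gamma))+v_i(x_i(\tilde\gamma,\beta))\bigr]$; the combinatorial fact $x_i(\tilde\beta,\gamma)+x_i(\tilde\gamma,\beta)\ge o_i$ (ties in favour of $i$) together with subadditivity then gives $\ge\tfrac12 v_i(o_i)$. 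This is the correct ``halving'': it halves over the \emph{roles of the two i.i.d.\ bid samples}, not over a binomially thinned \emph{quantity}. For the \upa\ an extra truncation step (removing a maximal ``overbidding'' index set $T_\beta\subseteq[o_i]$) keeps the deviation within the no--overbidding cone, and the same swap argument goes through. Your binomial $\ell$ and the combinatorial tail estimate $\sum_{\ell\ge j}\ell\sum_i q_\ell^{(i)}\le k-j+1$ play no role in the paper's proof and are not needed once the deviation is set up correctly.
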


\subsection{Proof Template for Bayesian Price of Anarchy} 

In order to present all our bounds from Theorem \ref{thm:main-sm} and Theorem
\ref{thm:main-sa} in a self-contained and unified manner, we make use of a proof
template which is formalized in Theorem~\ref{theorem:generic-stmt} below.
Variants of this approach have been used in several previous works
(e.g.,~\cite{Markakis12,Christodoulou08,Bhawalkar11}). 

\begin{theorem}
\label{theorem:generic-stmt}
Let $V$ be a class of valuation functions. Suppose that for every valuation
profile $\vec{v} \in V^n$, for every bidder $i \in [n]$, and for every
distribution ${\cal P}_{-i}$ over non-overbidding profiles $\vec{b}_{-i}$, there
is a bidding profile $\vec{b}'_{i}$ such that the following inequality holds for
some $\lambda > 0$ and $\mu \ge 0$:
%%%
\begin{equation}
\mathbb{E}_{\vec{b}_{-i}\sim{\cal P}_{-i}}
\Bigl[
u_i^{\vec{v}_i}(\vec{b}'_i,\vec{b}_{-i})
\Bigr] 
\geq
\lambda\cdot v_i(x^\vec{v}_i)
- \mu \cdot \mathbb{E}_{\vec{b}_{-i}\sim{\cal P}_{-i}}
\Biggl[
\sum_{j=1}^{x_i^\vec{v}}\beta_j(\vec{b}_{-i})
\Biggr].
\label{equation:myFeldman-dpa}
\end{equation}
%%%
Then, the \bpoa\ is at most
%%%
\begin{compactenum}[(i)]
\item $\max\{1, \mu\}/\lambda$ for the \dpa, 
\item $(\mu+1)/\lambda$ for the \upa.
\end{compactenum}
%%%
\end{theorem}

\noindent Note that in this theorem we make no assumptions regarding the bidding
interface; proving a bound for the uniform bidding interface only requires that
we exhibit a uniform bidding strategy $\vec{b}'_i$ for each bidder $i$ and for
any distribution ${\cal P}_{-i}$ of uniform non-overbidding profiles
$\vec{b}_{-i}$. \otrem{The Theorem's proof follows.}

\begin{proof}
For any particular bidder $i\in [n]$ let ${\cal U}^i\subseteq{\cal V}$ denote
the subset of valuation profiles $\vec{v}\in{\cal V}$ where $x^{\vec{v}}(i) \geq
1$, i.e., ${\cal U}^i=\{\vec{v}\in{\cal V}|x^{\vec{v}}(i) \geq 1\}$; these are
the profiles under which $i$ is a ``socially optimum winner''. Accordingly, we
let ${\cal W}^{\vec{v}}$ denote the subset of ``socially optimum winners'' in
valuation profile $\vec{v}\in{\cal V}$. 

Consider a Bayes-Nash equilibrium $\vec{B}$. Fix any valuation profile
$\vec{v}=(\vec{v}_i,\vec{v}_{-i})\in{\cal V}$ and a bidder $i\in [n]$. Assume
that bidder $i$ deviates according to a bidding vector $\vec{b}'_i$ satisfying
\eqref{equation:myFeldman-dpa}, taking ${\cal P}_{-i} =
\vec{B}_{-i}^{\vec{w}_{-i}}, \vec{w}_{-i} \sim \pi_{-i}$. By taking expectation
over all valuation profiles $\vec{w}_{-i}\in{\cal V}_{-i}$, we obtain
%%%
\begin{align}
\underset{\vec{w}_{-i}|\vec{v}_i}{
\mathbb{E}}
\Biggl[
\underset{\vec{b}_{-i}\sim\vec{B}_{-i}^{\vec{w}_{-i}}}{
\mathbb{E}}
\Bigl[
u_i^{\vec{v}_i}(\vec{b}'_i,\vec{b}_{-i})\Bigr]\Biggr]
&\geq\displaystyle
\lambda v_i(x_i^{\vec{v}})-
\mu\cdot
\underset{\vec{w}_{-i}|\vec{v}_i}{
\mathbb{E}}\left[
\underset{\vec{b}_{-i}\sim\vec{B}_{-i}^{\vec{w}_{-i}}}{
\mathbb{E}}
\left[
\sum_{j=1}^{x_i^{\vec{v}}}\beta_j(\vec{b}_{-i})
\right]\right]\nonumber\\
& \geq
\lambda v_i(x_i^{\vec{v}})-
\mu\cdot
\underset{\vec{w}}{
\mathbb{E}}\left[
\underset{\vec{b}\sim\vec{B}^{\vec{w}}}{
\mathbb{E}}
\left[
\sum_{j=1}^{x_i^{\vec{v}}}\beta_j(\vec{b})\right]\right],\nonumber
\end{align}
%%%
where the last inequality holds because
$\beta_j(\vec{b}_{-i})\leq\beta_j(\vec{b})$ for every $j = 1, \dots, k$ and by
the independence of $\pi_i$, i.e.,
$\sum_{\vec{w}_{-i}}\pi(\vec{w}_{-i}|v_i)=1=\sum_{\vec{w}}\pi(\vec{w})$. Because
$\vec{B}$ is a Bayes-Nash equilibrium, bidder $i$ does not have an incentive to
deviate and, thus:
%%%
\[
\mathbb{E}_{\vec{w}_{-i}|\vec{v}_i}
\Biggl[
\mathbb{E}_{\vec{b}\sim\vec{B}^{(\vec{v}_i,\vec{w}_{-i})}}
\Bigl[
u_i^{\vec{v}_i}(\vec{b})
\Bigr]\Biggr]
\geq
\mathbb{E}_{\vec{w}_{-i}|\vec{v}_i}
\Biggl[
\mathbb{E}_{\vec{b}_{-i}\sim\vec{B}_{-i}^{\vec{w_{-i}}}}
\Bigl[
u_i^{\vec{v}_i}(\vec{b}'_i, \vec{b}_{-i})
\Bigr]\Biggr].
\]
%%%
We conclude that 
%%%
\[
\mathbb{E}_{\vec{w}_{-i}|\vec{v}_i}
\Biggl[
\mathbb{E}_{\vec{b}\sim\vec{B}^{(\vec{v}_i,\vec{w}_{-i})}}
\Big[
u_i^{\vec{v}_i}(\vec{b})
\Big]\Biggr]
+ \mu
\mathbb{E}_{\vec{w}}
\left[
\mathbb{E}_{\vec{b}\sim\vec{B}^{\vec{w}}}
\left[
\sum_{j=1}^{x_i^{\vec{v}}}\beta_j(\vec{b})
\right]\right]
\geq
\lambda v_i(x_i^{\vec{v}}).
\]

Taking expectation of both sides over the distribution of $\vec{v}\in{\cal U}^i$
and summing over all bidders we obtain
%%%
\begin{align}
&\displaystyle
\sum_{i \in [n]}
\sum_{\vec{v}\in{\cal U}^i}\pi(\vec{v})\cdot
\left(
\underset{\vec{w}_{-i}|\vec{v}_i}{
\mathbb{E}}\Biggl[
\underset{\vec{b}\sim\vec{B}^{(v_i,\vec{w}_{-i})}}{
\mathbb{E}}
\Big[u_i^{v_i}(\vec{b})\Big]\Biggr]%\nonumber\\
%%%%%%%%%%%%%%%%%%%%%%%%%%%%%%%%%%%%%%%%%%%%%%%%%%%%%%%%%%%
+
\mu
\underset{\vec{w}}{
\mathbb{E}}\left[
\underset{\vec{b}\sim\vec{B}^{\vec{w}}}{
\mathbb{E}}\left[
\sum_{j=1}^{x_i^{\vec{v}}}\beta_j(\vec{b})\right]\right]\right)
\nonumber\\
%%%%%%%%%%%%%%%%%%%%%%%%%%%%%%%%%%%%%%%%%%%%%%%%%%%%%%%%%%%
&\displaystyle\geq
\sum_{i \in [n]}
\sum_{\vec{v}\in{\cal U}^i}\pi(\vec{v})\cdot
\lambda v_i(x_i^{\vec{v}})
=\sum_{\vec{v}}\pi(\vec{v})\sum_{i\in{\cal W}^{\vec{v}}}
\lambda v_i(x_i^{\vec{v}})
=\lambda\mathbb{E}_{\vec{v}}\Bigl[SW(\vec{v},\vec{x^{\vec{v}}})\Bigr].
\nonumber %\label{equation:sa-bne-guarantee}
\end{align}

By standard manipulations, the latter simplifies to 
%%%
$$
\underset{\vec{v}}{
\mathbb{E}}\left[
\underset{\vec{b}\sim\vec{B}^{\vec{v}}}{
\mathbb{E}}
\left[
\sum_{i \in [n]} u_i^{\vec{v}_i}(\vec{b})
\right]\right]+
\mu
\underset{\vec{w}}{
\mathbb{E}}\left[
\underset{\vec{b}\sim\vec{B}^{\vec{w}}}{
\mathbb{E}}
\left[
\sum_{i \in [n]}\sum_{j=1}^{x_i^{\vec{v}}}\beta_{j}(\vec{b})
\right]\right]
\geq
\lambda
\underset{\vec{v}}{
\mathbb{E}}\Bigl[SW(\vec{v},\vec{x^{\vec{v}}})\Bigr].
$$
%%%
Note that $\sum_{i \in [n]} x^\vec{v}_i = k$ and that $\beta_j(\vec{b})$ is
non-decreasing in $j$. We can therefore bound 
%%%
$$
\sum_{i \in [n]}
\sum_{j = 1}^{x^\vec{v}_i}
\beta_j(\vec{b})
\le
\sum_{j = 1}^k
\beta_j(\vec{b})
$$
%%%
and obtain 
\begin{equation}
\label{eq:finalll}
\underset{\vec{v}}{
\mathbb{E}}\left[
\underset{\vec{b}\sim\vec{B}^{\vec{v}}}{
\mathbb{E}}
\left[
\sum_{i \in [n]} u_i^{\vec{v}_i}(\vec{b})
\right]\right]+ \mu
\underset{\vec{w}}{
\mathbb{E}}\left[
\underset{\vec{b}\sim\vec{B}^{\vec{w}}}{
\mathbb{E}}
\left[
\sum_{j=1}^{k} \beta_{j}(\vec{b})
\right]\right]
\geq
\lambda
\underset{\vec{v}}{
\mathbb{E}}\Bigl[
SW(\vec{v},\vec{x^{\vec{v}}})
\Bigr].
\end{equation}

Note that for the discriminatory pricing rule the total payments under $\vec{b}$
are equal to $\sum_{j=1}^{k} \beta_{j}(\vec{b})$. Thus \eqref{eq:finalll} yields
%%%
$$
\underset{\vec{v}}{
\mathbb{E}}\Biggl[
\underset{\vec{b}\sim\vec{B}^{\vec{v}}}{
\mathbb{E}}
\Bigl[
SW(\vec{v}, \vec{b})
\Bigr]\Biggr]+ (\mu - 1)
\underset{\vec{w}}{
\mathbb{E}}\left[
\underset{\vec{b}\sim\vec{B}^{\vec{w}}}{
\mathbb{E}}
\left[
\sum_{j=1}^{k} \beta_{j}(\vec{b})
\right]\right]
\geq
\lambda
\underset{\vec{v}}{
\mathbb{E}}\Bigl[SW(\vec{v},\vec{x^{\vec{v}}})\Bigr].
$$
%%%
If $\mu \le 1$, the first statement of the theorem holds. If $\mu > 1$, then we
exploit that the total payments satisfy $\sum_{j=1}^k\beta_j(\vec{b})\leq\sum_{i
\in [n]} v_i(x_i(\vec{b}))=SW(\vec{v},\vec{b})$ because players never
overbid.\footnote{In order to derive the claimed bound for the case $\mu > 1$ we
need to exploit the no-overbidding assumption mentioned in
Section~\ref{section:definitions} for the Discriminatory Auction as well.
However, all our bounds derived in this paper exploit only that $\mu \le 1$ and
thus hold even without this assumption.} Dividing both sides by $\mu > 0$ proves
the first statement of the theorem in this case. 

For the uniform price rule we use that
$\sum_{j=1}^k\beta_j(\vec{b})\leq\sum_{i \in [n]}
v_i(x_i(\vec{b}))=SW(\vec{v},\vec{b})$ be\-cau\-se of the no-overbidding assumption
and that $\sum_{i \in [n]} u_i^{\vec{v}_i}(\vec{b})\leq\sum_{i \in [n]}
v_i(x_i(\vec{b}))=SW(\vec{v},\vec{b})$.  Thus, \eqref{eq:finalll} yields 
%%%
$$
(\mu + 1) \mathbb{E}_{\vec{v}}\left[\mathbb{E}_{\vec{b}\sim\vec{B}^{\vec{v}}}
\left[
SW(\vec{v}, \vec{b})
\right]\right]
\geq
\lambda\mathbb{E}_{\vec{v}}[SW(\vec{v},\vec{x^{\vec{v}}})].
$$
%%%
Dividing both sides by $\mu+1 > 0$ proves the second statement of the theorem.
\end{proof}

In Section~\ref{subsection:lower-bounds} we show that our bound of
$\frac{e}{e-1}$ for the \dpa\ is essentially best possible, if one sticks to the
proof template of Theorem \ref{theorem:generic-stmt}; this also rules out that
better bounds can be obtained via the techniques described in \cite{ST13} or
even \cite{Feldman12}\otrem{; the latter we use in
subsection~\ref{subsection:sa-standard}.}

\subsection{
Key Lemma and Proofs of Theorem~\ref{thm:main-sm} and Theorem~\ref{thm:main-sa}
}

The following is our key lemma to prove Theorem~\ref{thm:main-sm}. We point out
that it applies to arbitrary valuation functions and to any multi-unit auction
which is \emph{discriminatory price dominated}, i.e., the total payment
$P_i(\vec{b})$ of bidder $i$ under profile $\vec{b}$ satisfies $P_i(\vec{b}) \le
\sum_{j \in [x_i(\vec{b})]} b_i(j)$. Note that every multi-unit auction
guaranteeing \emph{individual rationality} must satisfy this condition.

\begin{lemma}[Key Lemma]
\label{lem:smoothness-key}
Let $\vec{v}$ be a valuation profile and suppose that the pricing rule is
discriminatory price dominated. Define $\tau_i = \arg \min_{j \in
[x_i^{\vec{v}}]} v_i(j)/j$ for every $i \in [n]$.  Then for every bidder $i \in
[n]$ and every bidding profile $\vec{b}_{-i}$ 
%there exists a randomized uniform bidding profile $\vec{b}'_i$ 
there exists a randomized uniform bidding strategy $B_i'$, such that for every
$\alpha > 0$
%%%
\begin{equation}
\label{eq:rand-smooth}
\mathbb{E}_{\vec{b}_i'\sim B_i'}\Bigl[
u_i^{\vec{v}_i}(\vec{b}'_i, \vec{b}_{-i})
\Bigr]\ge 
\alpha \left(1 - \frac{1}{e^{1/\alpha}}\right)
x_i^{\vec{v}} \frac{v_i(\tau_i)}{\tau_i} - 
\alpha \sum_{j = 1}^{x^\vec{v}_i}\beta_j(\vec{b}_{-i}).
\end{equation}
\end{lemma}

\begin{proof}
%%%
Define $G=(1-e^{-1/\alpha})$ and let $\vec{c}_i$ be the vector that is
$v_i(\tau_i)/\tau_i$ on the first $x_i^{\vec{v}}$ entries, and is $0$ everywhere
else. Let $t$ be a random variable drawn from $[0, G]$ with probability density
function $f(t) = \alpha/(1 - t)$. Define the random deviation of bidder $i$ as
$\vec{b}_i' = t \vec{c}_i$. Note that $\vec{b}'_i$ is always a uniform bid
vector. 

Let $k^*$ be the number of items that bidder $i$ would win under profile
$(G\vec{c}_i,\vec{b}_{-i})$, i.e., the number of items won by $i$, when $i$
would deviate to bid vector $G \vec{c}_i$. For $j = 0, \dots, k^*$, let
$\gamma_j$ refer to the infimum value in $[0,G]$ such that bidder $i$ would win
$j$ items if he would deviate to bid vector $\gamma_j \vec{c}_i$. Note that this
definition is equivalent to defining $\gamma_j$ as the least value in $[0,G]$
that satisfies $\gamma_j v_i(\tau_i)/{\tau_i} = \beta_{j}(\vec{b}_{-i})$.  For
notational convenience, we define $\gamma_{k^*+1} = G$.

Let $x_i(\vec{b}'_i, \vec{b}_{-i})$ be the random variable that denotes the
number of units allocated to bidder $i$ under $(\vec{b}'_i, \vec{b}_{-i})$. It
always holds that $x_i(\vec{b}'_i, \vec{b}_{-i}) \le k^* \leq x_i^{\vec{v}}$,
because bidder $i$ bids $b'_i(j) = 0$ for all $j = x^\vec{v}_i + 1, \dots, k$.
More precisely, we have $x_i(\vec{b}'_i, \vec{b}_{-i}) = j$ if $t \in (\gamma_j,
\gamma_{j+1}]$ for $j = 0, \dots, k^*$. By assumption, the payment of bidder $i$
under profile $(\vec{b}_i', \vec{b}_{-i})$ is at most $t x_i(\vec{b}_i',
\vec{b}_{-i}) v_i(\tau_i)/\tau_i$. Also note that, by definition of $\tau_i$, it
holds that $v_i(j) \geq j v_i(\tau_i)/\tau_i$ for $j \leq x_i^{\vec{v}}$. Using
these two facts, we can bound the expected utility of bidder $i$ as follows:
%%%
\begin{align*}
\mathbb{E}_{\vec{b}_i'\sim B_i'}\Bigl[
u_i^{\vec{v}_i}(\vec{b}'_i, \vec{b}_{-i})
\bigr] 
& \geq \sum_{j = 1}^{k^*} \int_{\gamma_{j}}^{\gamma_{j+1}} \Big(v_i(j) - t j \frac{v_i(\tau_i)}{\tau_i}\Big) f(t) dt \\
& \geq \sum_{j = 1}^{k^*} \int_{\gamma_{j}}^{\gamma_{j+1}} j \frac{v_i(\tau_i)}{\tau_i} (1 - t) f(t) dt = \alpha \sum_{j = 1}^{k^*} j \frac{v_i(\tau_i)}{\tau_i} \int_{\gamma_{j}}^{\gamma_{j+1}} 1 dt \\
& = \alpha \sum_{j = 1}^{k^*} j \frac{v_i(\tau_i)}{\tau_i} (\gamma_{j+1} - \gamma_{j}) 
 = \alpha G k^* \frac{v_i(\tau_i)}{\tau_i} - \alpha \sum_{j = 1}^{k^*} \gamma_j \frac{v_i(\tau_i)}{\tau_i}  \\
& = \alpha G k^* \frac{v_i(\tau_i)}{\tau_i} - \alpha \sum_{j = 1}^{k^*} \beta_j(\vec{b}_{-i}) \geq \alpha G x_i^{\vec{v}} \frac{v_i(\tau_i)}{\tau_i} - \alpha \sum_{j = 1}^{x_i^{\vec{v}}} \beta_j(\vec{b}_{-i}).
\end{align*}
%%%
The last inequality holds because $G v_i(\tau_i)/\tau_i \leq
\beta_{j}(\vec{b}_{-i})$, for $k^* + 1 \leq j \leq x_i^{\vec{v}}$, by the
definition of $k^*$. The above derivation implies (\ref{eq:rand-smooth}).
\end{proof}

\otrem{
The deviation $B_i'$, defined in Lemma~\ref{lem:smoothness-key}, is a
distribution on pure uniform bidding strategies $\vec{b}'_i$. Because the lemma
holds {\em for every} (pure) bidding profile $\vec{b}_{-i}$ of bidders other
than $i$, we can take expectation of both sides of \eqref{eq:rand-smooth}, over
any distribution ${\cal P}_{-i}$ of such profiles $\vec{b}_{-i}$. Then, we
obtain a version of the inequality,~\eqref{equation:myFeldman-dpa}, required by
Theorem~\ref{theorem:generic-stmt}, with expected values on both sides, over the
distribution $B_i'$. This means that $B_i'$ contains at least one pure strategy
$\vec{b}_i'$ in its support, satisfying exactly \eqref{equation:myFeldman-dpa},
for $(\lambda,\mu) = (\alpha \left(1 - e^{-1/\alpha}\right),\alpha)$.}

\otrem{Moreoever, because $B_i'$ yields pure uniform bidding strategies, the
lemma applies to both the standard and the uniform bidding interfaces.  Observe
also that every $\vec{b}'_i\sim B_i'$ satisfies the no-overbidding assumption.
}

\begin{proofof}{of Theorem~\ref{thm:main-sm}}
First consider the case of submodular valuation functions. In this case, $\tau_i
= x_i^{\vec{v}}$ for every $i \in [n]$, as explained in Section
\ref{section:definitions}. Using our Key Lemma, we conclude that
Theorem~\ref{theorem:generic-stmt} holds for $(\lambda,\mu) = (\alpha \left(1 -
e^{-1/\alpha}\right),\alpha)$. The stated bounds are obtained by choosing
$\alpha = 1$ for the \dpa\ and $\alpha = -1/(W_{-1}(-1/e^2) + 2) \approx 0.87$
for the \upa. 

Next consider the case of subadditive valuation functions. The following lemma
shows that subadditive valuation functions can be approximated by uniform ones,
thereby losing at most a factor $2$. 

\begin{lemma}\label{lem:approx}
If $v_i$ is subadditive, then $\frac{v_i(\tau_i)}{\tau_i} \ge \frac12
\frac{v_i(x_i^\vec{v})}{x_i^\vec{v}}$ with $\tau_i = \arg \min_{j \in
[x_i^{\vec{v}}]} \frac{v_i(j)}{j}$.
\end{lemma}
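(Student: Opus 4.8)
The plan is to prove the slightly cleaner claim that for any subadditive $v_i$ and any $m \in [k]$, the quantity $\min_{j \in [m]} v_i(j)/j$ is at least $\tfrac12 v_i(m)/m$; applying this with $m = x_i^{\vec v}$ gives exactly the statement, since $v_i(\tau_i)/\tau_i = \min_{j\in[x_i^{\vec v}]} v_i(j)/j$ by definition of $\tau_i$. So it suffices to show that there exists some $j^* \in [m]$ with $v_i(j^*)/j^* \ge \tfrac12 v_i(m)/m$, because then the minimum is trivially $\le$ that value is the wrong direction --- wait, we need the minimum to be large, so actually we need \emph{every} $j \in [m]$ to satisfy $v_i(j)/j \ge \tfrac12 v_i(m)/m$. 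That is the right target: prove that for subadditive $v_i$, $v_i(j)/j \ge \tfrac12 v_i(m)/m$ for all $j \le m$.

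The key tool is the standard fact, already quoted in Section~\ref{section:definitions}, that for subadditive $f$ and integers $x < y$ one has $f(x)/x \ge f(y)/(x+y)$. I would first fix $j \le m$ and split into two regimes. If $2j \ge m$, then by monotonicity $v_i(j) \ge$ is too weak; instead use subadditivity directly: write $m = j + (m-j)$ with $m - j \le j$, so $v_i(m) \le v_i(j) + v_i(m-j) \le 2 v_i(j)$ by monotonicity (since $v_i(m-j) \le v_i(j)$ when $m-j \le j$). Hence $v_i(j) \ge \tfrac12 v_i(m) \ge \tfrac12 \tfrac{j}{m} v_i(m)$, i.e. $v_i(j)/j \ge \tfrac12 v_i(m)/m$, as $j \le m$. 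If instead $2j < m$, repeatedly apply subadditivity: writing $m \le \lceil m/j\rceil \cdot j$ and using monotonicity plus subadditivity over $\lceil m/j\rceil$ blocks of size $j$ gives $v_i(m) \le \lceil m/j \rceil v_i(j) \le (m/j + 1) v_i(j) \le 2(m/j) v_i(j)$, where the last step uses $m/j > 2 \Rightarrow m/j + 1 < 2m/j$. Rearranging again yields $v_i(j)/j \ge \tfrac12 v_i(m)/m$. Taking the minimum over $j \in [m]$ completes the proof.

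The two cases can in fact be unified: for every $j \le m$, $v_i(m) \le \lceil m/j \rceil v_i(j)$ by subadditivity and monotonicity, and $\lceil m/j \rceil \le 2m/j$ holds for all integers $1 \le j \le m$ (since $\lceil m/j\rceil \le (m + j - 1)/j \cdot$ ... more simply $\lceil m/j \rceil \le m/j + 1 - 1/j \le 2m/j$ because $m/j \ge 1$). This single inequality $\lceil m/j\rceil \le 2m/j$ for $1 \le j \le m$ is really the crux, and it is elementary. Rearranging $v_i(m) \le (2m/j) v_i(j)$ gives $v_i(j)/j \ge \tfrac12 v_i(m)/m$ for every $j \in [m]$, hence for the minimizer $\tau_i$.

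I do not anticipate a genuine obstacle here; the only mild subtlety is handling the ceiling correctly (ensuring $\lceil m/j\rceil \le 2m/j$ for the edge cases $j = 1$ and $j = m$), and making sure the chain "subadditivity of $v_i$ over $\lceil m/j\rceil$ copies of $j$, combined with monotonicity to pad $\lceil m/j\rceil \cdot j \ge m$ up to $v_i(\lceil m/j\rceil j) \ge v_i(m)$" is stated cleanly. Everything else is a one-line rearrangement.
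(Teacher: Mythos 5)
Your proof is correct and takes essentially the same route as the paper. Both arguments case-split on whether $x_i^{\vec v} \le 2\tau_i$ or $x_i^{\vec v} > 2\tau_i$: in the first regime both use monotonicity ($v_i(x_i^{\vec v}-\tau_i) \le v_i(\tau_i)$), and in the second the paper invokes the preliminary fact $f(x)/x \ge f(y)/(x+y)$ for $x<y$ while you re-derive the same bound via the ceiling estimate $\lceil m/j\rceil \le 2m/j$. Your unified version is a slightly cleaner packaging that, incidentally, proves the stronger claim $v_i(j)/j \ge \tfrac12 v_i(m)/m$ for \emph{every} $j \le m$, not just the minimizer --- though the paper's argument also never uses that $\tau_i$ is the argmin, so it is equally general.

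One small technical point to tighten: in the unified step you propose to ``pad up'' via monotonicity, $v_i(m) \le v_i(\lceil m/j\rceil\, j)$, and then apply subadditivity $\lceil m/j\rceil$ times. But $\lceil m/j\rceil\, j$ can exceed $k$, where $v_i$ is not defined. Instead decompose $m$ itself: write $m=(q-1)j+r$ with $q=\lceil m/j\rceil$ and $0<r\le j$, so that subadditivity gives $v_i(m)\le (q-1)v_i(j)+v_i(r)\le q\,v_i(j)$ by monotonicity on the last term, never leaving the domain $\{0\}\cup[k]$. With that fix the argument is airtight.
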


By combining Lemma~\ref{lem:approx} with our Key Lemma, it follows that
Theorem~\ref{theorem:generic-stmt} holds for
$(\lambda,\mu)=(\frac{\alpha}{2}\left(1 - e^{-1/\alpha}\right),\alpha)$. The
bounds stated in Theorem~\ref{thm:main-sm} are obtained by the same choices of
$\alpha$ as for the submodular valuation functions.  \end{proofof}

\begin{proofof}{Lemma~\ref{lem:approx}}
Recall that $\tau_i = \arg \min_{j \in [x_i^{\vec{v}}]} v_i(j)/j$. Thus, it
suffices to show that $v_i(\tau_i)/\tau_i\geq v_i(x_i^\vec{v})/(2x_i^\vec{v})$.
By subadditivity, $v_i(x_i^\vec{v})/(2x_i^\vec{v})$ is at most:
%%%
\[
\frac{1}{2}\left(
\frac{v_i(x_i^\vec{v}-\tau_i)}{x_i^\vec{v}}+
\frac{v_i(\tau_i)}{x_i^\vec{v}}\right)\leq
\left\{
\begin{array}{ll}
\frac{1}{2}\left(
\frac{v_i(x_i^\vec{v}-\tau_i)}{\tau_i}+
\frac{v_i(\tau_i)}{\tau_i}\right)\leq
\frac{v_i(\tau_i)}{\tau_i} 
& \mbox{\ if\ }x_i^\vec{v}\leq 2\tau_i\\
\frac{1}{2}\left(
\frac{v_i(x_i^\vec{v}-\tau_i)}{x_i^\vec{v}}+
\frac{v_i(\tau_i)}{\tau_i}\right)\leq
\frac{v_i(\tau_i)}{\tau_i} 
& \mbox{\ if\ }x_i^\vec{v} > 2\tau_i
\end{array}\right.
\]

\noindent where in the first case 
%($x_i^\vec{v}-\tau_i\leq \tau_i$) 
we used monotonicity, i.e., $v_i(x_i^\vec{v}-\tau_i)\leq v_i(\tau_i)$ and in the
second case
%($x_i^\vec{v}-\tau_i > \tau_i$) 
we used subadditivity,
%~(\ref{sa-property})
i.e.,
$\frac{v_i(\tau_i)}{\tau_i}\geq\frac{v_i(x_i^\vec{v}-\tau_i)}{x_i^\vec{v}-\tau_i
+ \tau_i}=\frac{v_i(x_i^\vec{v}-\tau_i)}{x_i^\vec{v}}$.
\end{proofof}

\subsection{Subadditive Valuation Functions}
\label{subsection:sa-standard}

%\noindent 
Next, consider subadditive valuations under the \emph{standard} bidding format.
We derive improved bounds of $2$ and $4$ for the Discriminatory and Uniform
Price Auction, respectively. To this end, we adapt an approach recently
developed by Feldman {\em et al.}~\cite{Feldman12} to establish an analog of our
Key Lemma. The main idea is to construct the bid $\vec{b}_i'$ by using the
distribution ${\cal P}_{-i}$ on the profiles $\vec{b}_{-i}$.
Theorem~\ref{theorem:standard-sa-bounds} then follows from
Theorem~\ref{theorem:generic-stmt} in combination with
Lemma~\ref{lemma:myFeldman-dpa} below.

\begin{lemma}
Let $V$ be the class of subadditive valuation functions.  Then Theorem~3 holds
true with $(\lambda,\mu)=(\frac{1}{2},1)$ for the Discriminatory and
$(\lambda,\mu)=(\frac{1}{2},1)$ for the Uniform Price Auction (under the
standard bidding format).
%%%
\label{lemma:myFeldman-dpa}
\end{lemma}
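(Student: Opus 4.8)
The plan is to port the sampling argument of Feldman \emph{et al.}~\cite{Feldman12} to the multi-unit setting so as to obtain the pair $(\lambda,\mu)=(\tfrac{1}{2},1)$, and then to invoke Theorem~\ref{theorem:generic-stmt}. Fix a subadditive profile $\vec{v}$, a bidder $i$, write $q=x_i^{\vec{v}}$ for his optimal quantity, and let ${\cal P}_{-i}$ be any distribution over non-overbidding profiles $\vec{b}_{-i}$. As in~\cite{Feldman12}, the deviation $\vec{b}'_i$ will be randomized: bidder $i$ draws a ``hallucinated'' profile $\hat{\vec{b}}_{-i}\sim{\cal P}_{-i}$ \emph{independently} of the realized $\vec{b}_{-i}$, reads off its sorted thresholds $\hat{\beta}_1\le\dots\le\hat{\beta}_k$, and from them builds a \emph{non-increasing, no-overbidding} marginal bid vector. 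The only property of ${\cal P}_{-i}$ we use is that $\hat{\vec{b}}_{-i}$ and $\vec{b}_{-i}$ are i.i.d., so $\pe[\beta_j(\hat{\vec{b}}_{-i})]=\pe[\beta_j(\vec{b}_{-i})]$ for every $j$; this identity is what will allow us to swap ``hallucinated'' prices for ``real'' ones inside expectations.

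The core of the argument is a combinatorial lemma analogous to the Key Lemma (Lemma~\ref{lem:smoothness-key}): for the above data there is a non-increasing no-overbidding bid $\vec{b}'_i=\vec{b}'_i(\hat{\vec{b}}_{-i})$ with
\[
\pe_{\vec{b}_{-i},\,\hat{\vec{b}}_{-i}}\!\left[u_i^{\vec{v}_i}(\vec{b}'_i,\vec{b}_{-i})\right]\ \ge\ \tfrac{1}{2}\,v_i(q)\ -\ \pe_{\vec{b}_{-i}}\!\left[\sum_{j=1}^{q}\beta_j(\vec{b}_{-i})\right].
\]
Two features of the problem need to be reconciled here. First, subadditivity enters only through the elementary estimate $v_i(\lceil q/2\rceil)\ge\tfrac{1}{2}v_i(q)$ (from monotonicity and $v_i(q)\le v_i(\lceil q/2\rceil)+v_i(\lfloor q/2\rfloor)$): it is enough for bidder $i$ to capture about half of his optimal quantity, and he should do so by competing for the \emph{cheap} units, so that his discriminatory payment stays at most $\sum_{j\le q}\beta_j(\vec{b}_{-i})$. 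Second, and this is the genuinely new difficulty compared with~\cite{Feldman12}, bidder $i$ may not place an arbitrary per-unit bid: his marginal bids must be non-increasing, whereas the threshold sequence $\beta_1\le\dots\le\beta_q$ is non-decreasing. I would handle this by a reversal device --- bidding, roughly, the sampled thresholds in \emph{decreasing} order on a prefix of units --- which produces a legal non-increasing vector while winning only about half of the targeted units against a ``typical'' competing profile, together with a careful choice (following the combinatorial lemma of~\cite{Feldman12}) of how many units to target. That choice must simultaneously keep $\vec{b}'_i$ no-overbidding (only sampled thresholds that are small relative to the corresponding values of $v_i$ may be used) and guarantee that the value lost on the units bidder $i$ fails to win is paid for by the price he thereby avoids.

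Granting this inequality, the conclusion is immediate. By Theorem~\ref{theorem:generic-stmt}(i) with $(\lambda,\mu)=(\tfrac{1}{2},1)$, the \bpoa\ of the \dpa\ is at most $\max\{1,\mu\}/\lambda=1/\tfrac{1}{2}=2$. For the \upa\ I would not rebuild anything: the \upa\ is discriminatory price dominated (the uniform per-unit price never exceeds the lowest winning bid, hence is at most $b'_i(j)$ for every unit $j$ that $i$ wins), so the \emph{same} bid $\vec{b}'_i$ makes the template inequality of Theorem~\ref{theorem:generic-stmt} hold verbatim, again with $(\lambda,\mu)=(\tfrac{1}{2},1)$; part (ii) then gives a bound of $(\mu+1)/\lambda=2/\tfrac{1}{2}=4$, matching Theorem~\ref{thm:main-sa}.

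The step I expect to be the main obstacle is the combinatorial lemma itself. In the item-auction analysis of~\cite{Feldman12} a bidder wins an essentially arbitrary random \emph{subset} of his target set, and subadditivity transfers the ``each targeted item is won with probability at least $1/2$'' guarantee into ``at least half the value in expectation''. Here, by contrast, the units bidder $i$ wins always form a \emph{prefix}, and since subadditive functions need not be concave one cannot push expectations through $v_i$; a naive ``win each targeted unit with probability $\ge 1/2$'' bound only yields a factor four, not two. Designing the reversal-based deviation and the accompanying choice of target quantity so that it is non-increasing, no-overbidding, and still loses only the factor two against the true, independently sampled profile is the delicate point, and it is precisely where the extra flexibility of the Feldman \emph{et al.}\ technique --- as the paper itself remarks --- is required.
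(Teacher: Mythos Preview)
Your framework is right --- sample a ``hallucinated'' profile from ${\cal P}_{-i}$, keep its $q=x_i^{\vec v}$ lowest thresholds, and bid them in decreasing order (the reversal you describe). But the route you sketch for extracting the factor $\tfrac12$, via $v_i(\lceil q/2\rceil)\ge\tfrac12 v_i(q)$ and ``win about half the targeted units against a typical profile'', is not how the argument closes, and you correctly sense that it does not: there is no way to guarantee winning $\lceil q/2\rceil$ units almost surely, and pushing expectations through $v_i$ fails for subadditive $v_i$.

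The missing idea is a \emph{symmetrization} rather than a pointwise half-count. Let $\beta,\gamma\sim{\cal D}$ be two i.i.d.\ draws of the opponents' top-$k$ bids, and write $\tilde\beta$ for your reversed bid built from $\beta$. Since $(\beta,\gamma)$ and $(\gamma,\beta)$ are identically distributed,
\[
\mathbb{E}\bigl[v_i\bigl(x_i(\tilde\beta,\gamma)\bigr)\bigr]
=\tfrac12\,\mathbb{E}\bigl[v_i\bigl(x_i(\tilde\beta,\gamma)\bigr)+v_i\bigl(x_i(\tilde\gamma,\beta)\bigr)\bigr].
\]
Now a purely combinatorial fact does the work: $x_i(\tilde\beta,\gamma)+x_i(\tilde\gamma,\beta)\ge q$ for every pair $\beta,\gamma$ (ties in $i$'s favor). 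Monotonicity plus subadditivity then give $v_i(a)+v_i(b)\ge v_i(a+b)\ge v_i(q)$ \emph{pointwise}, so the expectation above is at least $\tfrac12 v_i(q)$, with no need to control the distribution of $x_i(\tilde\beta,\gamma)$ itself. Combined with the discriminatory-price bound $u_i\ge v_i(x_i)-\sum_{j\le q}\beta_j$ and the identity $\mathbb{E}[\sum_{j\le q}\beta_j]=\mathbb{E}_{\vec b_{-i}}[\sum_{j\le q}\beta_j(\vec b_{-i})]$, this yields exactly the $(\lambda,\mu)=(\tfrac12,1)$ inequality.

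Your claim that ``the same bid works verbatim'' for the \upa\ is too quick. The discriminatory-price-dominated observation is fine for bounding the payment, but for the \upa\ the deviation $\vec b'_i$ itself must be no-overbidding, and the bid $\tilde\beta$ above need not be: the sampled thresholds bear no relation to $v_i$. The paper handles this with a genuinely additional construction: for each sampled $\beta$ one first identifies a maximal index set $T_\beta\subseteq[q]$ with $\sum_{j\in T_\beta}\beta_j>v_i(|T_\beta|)$ and zeroes out those coordinates before bidding; maximality and subadditivity force the surviving coordinates to satisfy no-overbidding, and a short telescoping argument recovers the same $(\tfrac12,1)$ inequality. So the \upa\ case is not a free corollary of the \dpa\ case.
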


\subsubsection*{Discriminatory Pricing}

We consider first the discriminatory pricing rule. We shall re-use certain
arguments from the proof for this case, in the proof for uniform pricing. Fix
any bidder $i$ and let $\vec{b}_i$ be a bidding vector, conforming to the
requirement of non-increasing marginal bids and having only its first $x$
components equal to a non-zero value. Given any bidding profile $\vec{b}_{-i}$:
%%%
$$
u_i^{v_i}(\vec{b}_i,\vec{b}_{-i})\geq 
v_i\Bigl(x_i(\vec{b}_i,\vec{b}_{-i})\Bigr)-
\sum_{j\leq x}b_i(j),
$$
%%%
because $i$ may pay at most $\sum_{j\leq k}b_i(j)=\sum_{j\leq x}b_i(j)$, by the
definition of $\vec{b}_i$ and the auction's payment rule. Taking expectation
over the distribution ${\cal P}$ of $\vec{b}_{-i}$ we have:
%%%
\begin{equation}
\mathbb{E}_{\vec{b}_{-i}\sim{\cal P}}\Bigl[
u_i^{v_i}(\vec{b}_i,\vec{b}_{-i})
\Bigr]
\geq 
\mathbb{E}_{\vec{b}_{-i}\sim{\cal P}}\Biggl[
v_i\Bigl(x_i(\vec{b}_i,\vec{b}_{-i})\Bigr)\Biggr]
-
\sum_{j\leq x}b_i(j)
\label{equation:post-feldman-dpa}
\end{equation}

From this point on, we analyze the right-hand side
of~(\ref{equation:post-feldman-dpa}). Given the distribution ${\cal P}$ of
$\vec{b}_{-i}$, let ${\cal D}$ denote the distribution of the $k$ highest bids
in $\vec{b}_{-i}$, $\beta_1(\vec{b}_{-i})\leq\cdots\leq\beta_k(\vec{b}_{-i})$;
to simplify notation, in the sequel we use simply $\beta_j$, $j=1,\dots, k$,
without a reference to $\vec{b}_{-i}$. For every fixed bid vector of bidder $i$,
the expected utility of  $i$ when the other bidders bid according to ${\cal P}$,
is equal to the expected utility of bidder $i$ in the two-bidders auction, where
the other bidder bids according to ${\cal D}$. We can thus assume that $i$
competes only against $\beta\sim {\cal D}$, containing $\beta_j$, $j=1,\dots,
k$. For notational purposes below, we shall also use $\gamma$ to denote a vector
drawn from ${\cal D}$. Notice that each $\beta\sim{\cal D}$ contains $\beta_j$,
$j=1,\dots,k$ in non-increasing order, as required by the auction format;
however, for our analysis, we find it convenient to index the individual bids,
$\beta_j$, in non-decreasing order.

We consider what happens when $i$ responds to ${\cal P}$ (i.e., ${\cal D}$, in
the two-bidders auction), by bidding a $\vec{b}_i=\tilde{\beta}$, that he
constructs as follows: he samples a vector $\beta$ from ${\cal D}$ and {\em
zeroes-out} the $k-x$ {\em highest} values in $\beta$. Subsequently, he adds to
all components of the ``truncated'' vector a sufficiently small $\epsilon$. This
latter modification we shall omit in our analysis below, to simplify its
exposition. It serves the purpose of avoiding ties with the opposing bids; the
analysis is valid when this $\epsilon$ becomes vanishingly small. Let
$\tilde{\cal D}$ denote the distribution of such $\vec{b}_i$. Continuing
from~(\ref{equation:post-feldman-dpa}), the expected utility of $i$ over
$\vec{b}_i\sim\tilde{\cal D}$ is: 
%%%
\begin{align}
%%%%%%%%%%%%%%%%%%%%%%%%%%%%%%%%%%%%%%%%%%%%%%%%%%%%%%%%%%%
&\displaystyle
\underset{\vec{b}_i\sim\tilde{\cal D}}{
\mathbb{E}}\Biggl[
\underset{\vec{b}_{-i}\sim{\cal P}}{
\mathbb{E}}\Bigl[
u_i^{v_i}(\vec{b}_i,\vec{b}_{-i})
\Bigr]\Biggr]
\geq 
\underset{\vec{b}_i\sim\tilde{\cal D}}{
\mathbb{E}}\Biggl[
\underset{\vec{b}_{-i}\sim{\cal P}}{
\mathbb{E}}\Biggl[
v_i\Bigl(x_i(\vec{b}_i,\vec{b}_{-i})\Bigr)\Biggr]
-
\sum_{j\leq x}b_i(j)
\Biggr]\nonumber\\
%%%%%%%%%%%%%%%%%%%%%%%%%%%%%%%%%%%%%%%%%%%%%%%%%%%%%%%%%%%
&=\displaystyle
\underset{\substack{\beta\sim{\cal D}\\ \vec{b}_{-i} \sim {\cal P}}}{
\mathbb{E}}
\Biggl[
v_i\Bigl(x_i(\tilde\beta,\vec{b}_{-i})\Bigr)
-
\sum_{j\leq x}\beta_j
\Biggr]
%%%%%%%%%%%%%%%%%%%%%%%%%%%%%%%%%%%%%%%%%%%%%%%%%%%%%%%%%%%
=\displaystyle
\underset{\substack{\beta\sim{\cal D}\\ \gamma\sim{\cal D}}}{
\mathbb{E}}
\Biggl[
v_i\Bigl(x_i(\tilde\beta,\gamma)\Bigr)
\Biggr]
-
\underset{\substack{\beta\sim{\cal D}\\ \gamma\sim{\cal D}}}{
\mathbb{E}}
\Biggl[
\sum_{j\leq x}\beta_j
\Biggr]%\nonumber
\label{equation:feldman-trick-begin}\\
%%%%%%%%%%%%%%%%%%%%%%%%%%%%%%%%%%%%%%%%%%%%%%%%%%%%%%%%%%%
&=\displaystyle
\frac{1}{2}\cdot
2\cdot
\underset{\substack{\beta\sim{\cal D}\\ \gamma\sim{\cal D}}}{
\mathbb{E}}
\Biggl[
v_i\Bigl(x_i(\tilde\beta,\gamma)\Bigr)
\Biggr]
-
\underset{\beta\sim{\cal D}}{
\mathbb{E}}
\left[
\sum_{j\leq x}\beta_j
\right]\nonumber\\
%%%%%%%%%%%%%%%%%%%%%%%%%%%%%%%%%%%%%%%%%%%%%%%%%%%%%%%%%%%
&=\displaystyle
\frac{1}{2}\left\{
\underset{\substack{\beta\sim{\cal D}\\ \gamma\sim{\cal D}}}{
\mathbb{E}}%_{\substack{\beta\sim{\cal D}\\ \gamma\sim{\cal D}}}
\Biggl[
v_i\Bigl(x_i(\tilde\beta,\gamma)\Bigr)
+
v_i\Bigl(x_i(\tilde\gamma,\beta)\Bigr)
\Biggr]
\right\}
-
\underset{\beta\sim{\cal D}}{
\mathbb{E}}
\left[
\sum_{j\leq x}\beta_j
\right]\nonumber\\
%%%%%%%%%%%%%%%%%%%%%%%%%%%%%%%%%%%%%%%%%%%%%%%%%%%%%%%%%%%
&\geq\displaystyle
\frac{1}{2}
\mathbb{E}_{\substack{\beta\sim{\cal D}\\ \gamma\sim{\cal D}}}
\Bigl[v_i(x)\Bigr]
-
\mathbb{E}_{\beta\sim{\cal D}}
\left[
\sum_{j\leq x}\beta_j
\right]
%%%%%%%%%%%%%%%%%%%%%%%%%%%%%%%%%%%%%%%%%%%%%%%%%%%%%%%%%%%
=\displaystyle
\frac{1}{2}v_i(x)
-
\mathbb{E}_{\vec{b}_{-i}\sim{\cal P}}
\left[
\sum_{j=1}^x\beta_j(\vec{b}_{-i})
\right]
\label{equation:feldman-trick-end}
\end{align}
%%%
where~(\ref{equation:feldman-trick-begin}) is due to the fact that $\sum_{j\leq
x}b_i(j)=\sum_{j\leq x}\tilde\beta_j=\sum_{j\leq x}\beta_j$, by construction of
$\tilde\beta$ and because $\vec{b}_i$ has its components (taken from
$\tilde\beta$) in non-increasing order. The last inequality,
in~(\ref{equation:feldman-trick-end}), holds by subadditivity of $v_i$,
particularly, because $x_i(\beta,\gamma)+x_i(\gamma,\beta)\geq x$ (when ties are
always resolved in favor of $i$). \otrem{From the derivation above we deduce
that there is at least one pure bidding strategy $\vec{b}_i\sim\bar{{\cal D}}$
that satisfies the requirements of Theorem~\ref{theorem:generic-stmt}
and~\eqref{equation:myFeldman-dpa}, with $(\lambda, \mu)=\left(\frac{1}{2},
1\right)$.}\qed

\subsubsection*{Uniform Pricing}

Fix any bidder $i$ and let $\vec{b}_i$ be a bidding vector with non-zero value
for each of the first $x$ components and zero value for the rest. Notice that,
given any bidding profile $\vec{b}_{-i}$:

\begin{align}
u_i^{v_i}(\vec{b}_i,\vec{b}_{-i})
&=\displaystyle
v_i\Bigl(x_i(\vec{b}_i,\vec{b}_{-i})\Bigr)-
x_i(\vec{b}_i,\vec{b}_{-i})\cdot p(\vec{b}_i,\vec{b}_{-i})
\nonumber\\
&\geq\displaystyle
v_i\Bigl(x_i(\vec{b}_i,\vec{b}_{-i})\Bigr)-
\sum_{j\leq x}b_i(j)\nonumber
\end{align}

\noindent where $p(\vec{b}_i,\vec{b}_{-i})$ is the uniform price that $i$ will
pay under the profile $(\vec{b}_i,\vec{b}_{-i})$. This price cannot be more than
the sum of the winning bids, $\sum_{j\leq x}b_i(j)$; this justifies the
inequality above. Taking expectation over the distribution ${\cal P}$ of
$\vec{b}_{-i}$ we have:

\begin{equation}
\mathbb{E}_{\vec{b}_{-i}\sim{\cal P}}\Bigl[
u_i^{v_i}(\vec{b}_i,\vec{b}_{-i})
\Bigr]
\geq 
\mathbb{E}_{\vec{b}_{-i}\sim{\cal P}}\Biggl[
v_i\Bigl(x_i(\vec{b}_i,\vec{b}_{-i})\Bigr)\Biggr]
-
\sum_{j\leq x}b_i(j)
\label{equation:post-feldman-upa}
\end{equation}

Our analysis from this point on will focus on identifying an appropriate bid
$\vec{b}_i$ for bidder $i$, that satisfies also {\em no-overbidding}.
Subsequently, we shall return to process~(\ref{equation:post-feldman-upa})
further. As previously, given the distribution ${\cal P}$ of $\vec{b}_{-i}$, we
define the distribution ${\cal D}$ of the $k$ highest bids in $\vec{b}_{-i}$,
$\beta_1(\vec{b}_{-i})\leq\cdots\leq\beta_k(\vec{b}_{-i})$; we use simply
$\beta_j$, $j=1,\dots, k$, without a reference to $\vec{b}_{-i}$. Each of these
is a potential uniform price that $i$ might need to pay, depending on his own
bids, $\vec{b}_i$. Moreover, we can assume that $i$ competes against a bid vector
$\beta$, containing only the bids $\beta_j$, $j=1,\dots,k$, and need not bother
with any smaller bids. Fix any such vector $\beta$ from the support of ${\cal
D}$ and let $T_{\beta}\subseteq[x]$ be a maximal subset of indices, such that:
$\sum_{j\in T_{\beta}}\beta_j > v_i(|T_{\beta}|)$. Let
$\bar{T}_{\beta}=[x]\setminus T_{\beta} $. Then, {\em we claim that:}
$
\sum_{j\in \bar{T}_{\beta}}\beta_j\leq v_i(|\bar{T}_{\beta}|)
$.
%%%
\noindent Indeed, if there exists $R\subseteq \bar{T}_{\beta}$ with $\sum_{j\in
R}\beta_j > v_i(|R|)$, then, by subadditivity of $v_i$ and monotonicity:

$$
v_i(|R\cup T_{\beta}|)\leq v_i(|R|)+v_i(|T_{\beta}|)<
\sum_{j\in R}\beta_j+\sum_{j\in T_{\beta}}\beta_j=
\sum_{j\in R\cup T_{\beta}}\beta_j
$$

\noindent which contradicts the maximality of $T_{\beta}$. 

Next, define $\tilde{\cal D}$ to be a distribution of bidding vectors similar to
${\cal D}$, differing from it as follows. For every vector $\beta$ in the
support of ${\cal D}$ that occurs with a certain probability, a vector
$\tilde\beta$ exists in the support of $\tilde{\cal D}$, that occurs with the
same probability and is made from $\beta$ according to the following rules: 
%%%
\begin{enumerate}
\item Identify a subset of indices $T_{\beta}$ for $\beta$ and let
$\bar{T}_{\beta}=[x]\setminus T_{\beta}$.
%%%
\item $\tilde\beta$ contains all bids of $\beta$ except for $\{\beta_j|j\in
T_{\beta}\}$; it contains a zero for each bid therein.
%%%
\end{enumerate}
%%%
%As previously, in our analysis for the Discriminatory Auction, a further modification
%that we shall omit in our analysis is that: $i$ adds a sufficiently small
%$\epsilon>0$ to each of the components of the resulting vector. This serves
%tie-breaking in favor of $i$, as before. 
Sampling a vector from $\tilde{\cal D}$ is equivalent to sampling a vector
$\beta$ from ${\cal D}$ and constructing $\tilde\beta$ as prescribed (using ``\
$\tilde{}$\ '' over a $\beta$ will correspond to this processing). For any
$\beta\sim{\cal D}$ and for any arbitrary bid $\vec{b}_i$, we observe that
$x_i(\vec{b}_i, \tilde\beta)\leq x_i(\vec{b}_i,\beta)+|T_{\beta}|$. Thus, by
subadditivity and monotonicity of $v_i$:
%%%
$$
v_i\Bigl(x_i(\vec{b}_i, \tilde\beta)\Bigr)
\leq
v_i\Bigl(x_i(\vec{b}_i,\beta)\Bigr)
+
v_i(\,|T_{\beta}|\,)
$$
%%%
Using $\sum_{j\leq x}\beta_j-\sum_{j\in \bar{T}_{\beta}}\beta_j=\sum_{j\in
T_{\beta}}\beta_j\geq v_i(\,|T_{\beta}|\,)$, we obtain:

\begin{equation}
v_i\Bigl(x_i(\vec{b}_i, \beta)\Bigr)-
\sum_{j\in \bar{T}_{\beta}}\beta_j
\geq
v_i\Bigl(x_i(\vec{b}_i, \tilde\beta)\Bigr)-
\sum_{1\leq j\leq x}\beta_j
\label{equation:feldman-1}
\end{equation}

\begin{equation}
\mbox{\noindent Thus:\quad}
\underset{\beta\sim{\cal D}}{
\mathbb{E}}\left[
v_i\Bigl(x_i(\vec{b}_i, \beta)\Bigr)-
\sum_{j\in \bar{T}_{\beta}}\beta_j
\right]
\geq
\underset{\beta\sim{\cal D}}{
\mathbb{E}}\left[
v_i\Bigl(x_i(\vec{b}_i, \tilde\beta)\Bigr)-
\sum_{1\leq j\leq x}\beta_j
\right]
\label{equation:feldman-2}
\end{equation}

Now consider $\vec{b}_i$ being drawn from the distribution $\tilde{\cal D}$.
Notice that, by their construction, all the bid vectors in the support of
$\tilde{\cal D}$ satisfy no-overbidding. As previously, in our analysis for the
Discriminatory Auction, a further modification that we shall omit in the
exposition of our analysis is that: $i$ adds a sufficiently small $\epsilon>0$
to each of the components of the bidding vector $\vec{b}_i$ drawn from
$\tilde{\cal D}$ (even to the zero-valued ones). We take expectation of the {\em
left-hand side} of~(\ref{equation:feldman-2}) over $\vec{b}_i\sim\tilde{\cal
D}$:

\begin{align}
\displaystyle
\underset{\substack{
\vec{b}_i\sim\tilde{\cal D}\\
\gamma\sim{\cal D}}
}{\mathbb{E}}
\left[
v_i\Bigl(x_i(\vec{b}_i, \gamma)\Bigr)
-\sum_{j\in \bar{T}_{\gamma}}\gamma_j
\right]
%%%%%%%%%%%%%%%%%%%%%%%%%%%%%%%%%%%%%%%%%%%%%%%%
&
=\displaystyle
\underset{\substack{
\beta\sim{\cal D}\\ 
\gamma\sim{\cal D}}
}{\mathbb{E}}\left[
v_i\Bigl(x_i(\tilde{\beta}, \gamma)\Bigr)
-
\sum_{j\in \bar{T}_{\gamma}}\gamma_j
\right]\nonumber\\
%%%%%%%%%%%%%%%%%%%%%%%%%%%%%%%%%%%%%%%%%%%%%%%%
&
=\displaystyle
\underset{\substack{
\beta\sim{\cal D}\\ \gamma\sim{\cal D}}
}{\mathbb{E}}\Biggl[
v_i\Bigl(x_i(\tilde\beta, \gamma)\Bigr)
\Biggr]
-
\underset{\beta\sim{\cal D}}{\mathbb{E}}
\left[
\sum_{j\in \bar{T}_{\beta}}\beta_j
\right]
\label{equation:feldman-half-1}
\end{align}

\noindent Accordingly, we take expectation of the {\em right-hand side}
of~(\ref{equation:feldman-2}) over $\bar{{\cal D}}$, to obtain:

\begin{align}
\displaystyle
\underset{\substack{
\vec{b}_i\sim\bar{{\cal D}}\\ 
\gamma\sim{\cal D}}
}{\mathbb{E}}
\left[
v_i\Bigl(x_i(\vec{b}_i, \tilde\gamma)\Bigr)
-\sum_{1\leq j\leq x}\gamma_j
\right]
%%%%%%%%%%%%%%%%%%%%%%%%%%%%%%%%%%%%%%%%%%%%%%%%
&=\displaystyle
\underset{\substack{\beta\sim{\cal D}\\ \gamma\sim{\cal D}}}{\mathbb{E}}
\Biggl[
v_i\Bigl(x_i(\tilde\beta, \tilde\gamma)\Bigr)
\Biggr]
-
\underset{\beta\sim{\cal D}}{\mathbb{E}}
\left[
\sum_{1\leq j\leq x}\beta_j
\right]
\nonumber
%\label{equation:feldman-half-2}
\\
&\geq
\displaystyle
\underset{\substack{\beta\sim{\cal D}\\ \gamma\sim{\cal D}}}{\mathbb{E}}
\Biggl[
v_i\Bigl(x_i(\tilde\beta,\gamma)\Bigr)
\Biggr]
-
\underset{\beta\sim{\cal D}}{\mathbb{E}}
\left[
\sum_{1\leq j\leq x}\beta_j
\right]
\label{equation:feldman-half-2}
\end{align}

\noindent Notice that the second line of this derivation differs from the
right-hand side of the first line only in that the allocation $x_i(\cdot,\cdot)$
of $i$ is evaluated when he plays $\tilde\beta$ against $\gamma$, instead of
against $\tilde\gamma$. Since $\tilde\gamma$ is a ``truncation'' of $\gamma$,
the inequality $x_i(\tilde\beta,\tilde\gamma)\geq x_i(\tilde\beta,\gamma)$ is
justified. Finally, we take expectation of~(\ref{equation:post-feldman-upa})
over $\vec{b}_i\sim\tilde{\cal D}$, to obtain:

\begin{align}
&\displaystyle
\underset{
\substack{\vec{b}_i\sim\tilde{\cal D}\\ \vec{b}_{-i}\sim{\cal P}}
}{\mathbb{E}}\Bigl[
u_i^{v_i}(\vec{b}_i,\vec{b}_{-i})
\Bigr]
%%%%%%%%%%%%%%%%%%%%%%%%%%%%%%%%%%%%%%%%%%%%%%%%
%&
\geq\displaystyle
\underset{
\substack{\vec{b}_i\sim\tilde{\cal D}\\ \vec{b}_{-i}\sim{\cal P}}
}{\mathbb{E}}\Biggl[
v_i\Bigl(x_i(\vec{b}_i,\vec{b}_{-i})\Bigr)
-\sum_{j\leq x}b_i(j)
\Biggr]\nonumber\\
%%%%%%%%%%%%%%%%%%%%%%%%%%%%%%%%%%%%%%%%%%%%%%%%
&=\displaystyle
\underset{
\substack{\beta\sim{\cal D}\\ \gamma\sim{\cal D}}
}{\mathbb{E}}\Biggl[
v_i\Bigl(x_i(\tilde\beta,\gamma)\Bigr)
-
\sum_{j\in \bar{T}_{\beta}}\beta_j
\Biggr]
%%%%%%%%%%%%%%%%%%%%%%%%%%%%%%%%%%%%%%%%%%%%%%%%
=\displaystyle
\underset{
\substack{\beta\sim{\cal D}\\ \gamma\sim{\cal D}}
}{\mathbb{E}}
\Biggl[
v_i\Bigl(x_i(\tilde\beta,\gamma)\Bigr)
\Biggr]
-
\underset{\beta\sim{\cal D}}{\mathbb{E}}
\left[
\sum_{j\in \bar{T}_{\beta}}\beta_j
\right]
\label{equation:feldman-utility}
\end{align}

\noindent
By~(\ref{equation:feldman-utility}),~(\ref{equation:feldman-half-2}),~(\ref{equation:feldman-half-1})
and~(\ref{equation:feldman-2}) we derive:

\begin{equation}
\underset{
\substack{\vec{b}_i\sim\bar{{\cal D}}\\ \vec{b}_{-i}\sim{\cal P}}
}{\mathbb{E}}\Bigl[
u_i^{v_i}(\vec{b}_i,\vec{b}_{-i})
\Bigr]
\geq
\underset{\substack{\beta\sim{\cal D}\\ \gamma\sim{\cal D}}}{\mathbb{E}}
\Biggl[
v_i\Bigl(x_i(\tilde\beta, \gamma)\Bigr)
\Biggr]
-
\underset{\beta\sim{\cal D}}{\mathbb{E}}
\left[
\sum_{1\leq j\leq x}\beta_j
\right]
\end{equation}

\noindent The lower bounding by $v_i(x)/2$ of the first term of the right-hand
side of this expression is done as previously, for the Discriminatory Auction,
in derivations~(\ref{equation:feldman-trick-begin})
through~(\ref{equation:feldman-trick-end})\otrem{; as before, we ensure
existence of a pure bidding strategy $\vec{b}_i\sim\bar{{\cal D}}$ that
satisfies the requirements of Theorem~\ref{theorem:generic-stmt}
and~\eqref{equation:myFeldman-dpa}, with $(\lambda, \mu)=\left(\frac{1}{2},
1\right)$.}\qed

\subsection{Lower Bounds}
\label{subsection:lower-bounds}

%%%%%%%%%%%%%%%%%%%%%%%%%%%%%%%%%%%%%%%%%%%%%%%%%%%%%%%%%%%%%%%%%%%%%%%%%%
%%%%%%%%%%%%%%%%%%%%%%%%%%%%%%%%%%%%%%%%%%%%%%%%%%%%%%%%%%%%%%%%%%%%%%%%%%
%%%%%%%%%%%%%%%%%%%%%%%%%%%%%%%%%%%%%%%%%%%%%%%%%%%%%%%%%%%%%%%%%%%%%%%%%%
\begin{comment}
A lower bound of approximately $\frac{e}{e-1}$ for \upa s with submodular bidders was given in \cite{Markakis12}. Thus, our upper bound for this case is less than a factor $2$ away. For subadditive valuation functions, we prove in {\bf Appendix B} a lower bound of almost $2$

\begin{theorem}
The Price of Anarchy is at least $\frac{2k}{k+1}$ for the \upa\ with subadditive valuations (under the uniform bidding interface).
\label{theorem:upa-sa-lb}
\end{theorem}
\end{comment}
%%%%%%%%%%%%%%%%%%%%%%%%%%%%%%%%%%%%%%%%%%%%%%%%%%%%%%%%%%%%%%%%%%%%%%%%%%
%%%%%%%%%%%%%%%%%%%%%%%%%%%%%%%%%%%%%%%%%%%%%%%%%%%%%%%%%%%%%%%%%%%%%%%%%%
%%%%%%%%%%%%%%%%%%%%%%%%%%%%%%%%%%%%%%%%%%%%%%%%%%%%%%%%%%%%%%%%%%%%%%%%%%

The only lower bound known for the \dpa\ is due to Christodoulou {\em et
al.}~\cite{Christodoulou13}; the authors prove a lower bound of $1.109$ for an
auction involving two bidders and three items. However, they argue that
their construction does not appear to be further extendable, towards achievement
of a tight lower bound.
%No tight lower bound is known for the \dpa, although {\em Demand Reduction} (which is
%responsible for welfare loss in this format) has been observed
%previously~\cite{Krishna02,Ausubel02}. 
We prove here an \emph{impossibility result}, showing that for
the \dpa\ no bound better than $\frac{e}{e-1}$ on the Price of Anarchy can be
achieved via the proof template given in Theorem~\ref{theorem:generic-stmt}.
Similarly, for the \upa\ we rule out that a bound better than 2 on the Price of
Anarchy can be derived through this template.

\begin{theorem}
\label{theorem:lowerbound-dpa}
There is a lower bound of $\frac{e}{e-1}$ and $2$ on the Bayesian Price of
Anarchy for the \dpa\ and the \upa, respectively, with submodular valuation
functions that can be derived through the proof template given in
Theorem~\ref{theorem:generic-stmt}.
\end{theorem}

\begin{proof}
We first prove the lower bound for the \dpa . Fix $k \in \mathbb{N}$ and $\mu
\geq 0$ arbitrarily. We construct an instance of the \dpa\ with $2$ bidders,
submodular valuation functions $\vec{v}$ and bidding vectors $\vec{b}$, such
that for every possible deviation $\vec{b}'_i$ of bidder $i =  1, 2$ we have
%%%
\begin{equation*}
\sum_{i = 1}^2 u_i^{\vec{v_i}}(\vec{b}_i',\vec{b}_{-i}) 
\leq \mu\left(1-\frac{1}{e^{1/\mu}} + \frac{1}{k}\left(1 - \frac{1}{e}\right)\right) SW(\vec{v},\vec{x}^{\vec{v}}) - \mu \sum_{j=1}^k \beta_j(\vec{b}).
\end{equation*}
%%%
By taking $k$ to infinity, we see that for any fixed value of $\mu$, the best
Price of Anarchy that we can obtain using Theorem \ref{theorem:generic-stmt} is
$\max\{1,\mu\}/(\mu(1 - e^{-1/\mu}))$. The latter expression is minimized by
taking $\mu = 1$, and from this the claim follows. 

The construction of our instance is as follows: Let the valuation functions be
defined as $v_1(j) = j$ and $v_2(j) = 0$ for every $j \in [k]$. Then
$x_1^{\vec{v}} = k$ and $x_2^{\vec{v}}= 0$ and the optimal social welfare is
$SW(\vec{v},\vec{x}^{\vec{v}}) = k$. Define the bid vector $\vec{b}_1$ of bidder
1 to be the zero vector and the bid vector $\vec{b}_2$ of bidder 2 as
%%%
\begin{equation*}
b_2(j) = 
\begin{cases}
1 - \frac{k}{e^{1/\mu}(k-j+1)} & \text{ if } 1 \leq j \leq k\left(1 - \frac{1}{e^{1/\mu}}\right) + 1, \\
0 & \text{ if } j > k\left(1 - \frac{1}{e^{1/\mu}}\right) + 1.
\end{cases}
\end{equation*}
%%%
We assume that the tie-breaking rule of the auction always assigns a unit to
bidder $1$ when there is a tie. Then, if bidder $2$ bids $\vec{b}_2$, there is a
$j$ between $1$ and $k\left(1 - \frac{1}{e^{1/\mu}}\right)+1$ such that bidder
$1$ maximizes his utility when he sets all his bids equal to $b_2(j)$. Let
$\vec{b}_1' = b_2(j) \vec{1}$ for some $j$ in this range. We have
%%%
\begin{align*}
u_1^{\vec{v}_1}(\vec{b}_1', \vec{b}_{-2}) 
 & = v_1(k-j+1) - (k-j+1)b_2(j) \\
& =  \frac{k}{e^{1/\mu}} + \mu \sum_{\ell=1}^k b_2(j) - \mu \sum_{\ell=1}^k \beta_j(\vec{b}) \\
& \le \frac{k}{e^{1/\mu}} + \mu \int_1^{k(1 - e^{-1/\mu}) + 1} \left(1 - \frac{k}{e^{1/\mu}}(k-t+1)\right)dt + \\
& \quad \mu \sum_{\ell=1}^{\lceil k(1 - \frac{1}{e^{1/\mu}}) \rceil} (b_2(\ell) - b_2(\ell+1)) - \mu \sum_{\ell=1}^k \beta_j(\vec{b}) \\
& = k\mu\left(1 - \frac{1}{e^{1/\mu}}\right) +  b_2(1) - \mu \sum_{\ell=1}^k \beta_j(\vec{b}) \\
& = \mu\left(1 - \frac{1}{e^{1/\mu}} + \frac{1}{k} \left(1 - \frac{1}{e}\right)\right)SW(\vec{v},\vec{x}^{\vec{v}}) - \mu \sum_{\ell=1}^k  \beta_j(\vec{b}).
\end{align*}
%%%
For bidder 2, $u_2^{\vec{v}_2}(\vec{b}'_2, \vec{b}_{-1}) \le 0$ for every bid
vector $\vec{b}'_2$. This establishes our claim for the \dpa.  \end{proof}

\iffalse
\begin{theorem}\label{theorem:lowerbound-dpa}
Fix $k \in \mathbb{N}$ arbitrarily. There exists an instance of the \dpa\ with
$2$ bidders, submodular valuation functions $\vec{v}$ and bidding vectors
$\vec{b}$, such that for every possible deviation $\vec{b}'_i$ of bidder $i =
1, 2$ we have
%%%
\begin{equation*}
\sum_{i = 1}^2 u_i^{\vec{v_i}}(\vec{b}_i',\vec{b}_{-i}) 
\leq \left(
1-\frac{1}{e} + \frac{1}{k}\left(
1 - \frac{1}{e}\right)\right)
SW(\vec{v},\vec{x}^{\vec{v}}) - \sum_{j=1}^k \beta_j(\vec{b}).
\end{equation*}
\end{theorem}
\fi

Theorem~\ref{theorem:lowerbound-dpa} rules out the possibility of obtaining
better bounds by means of the smoothness framework of~\cite{ST13}, or by means
of {\em any} approach aiming at identifying the $\vec{b}_i'$ required by
Theorem~\ref{theorem:generic-stmt}, including~\cite{Feldman12}. These are
essentially the only known techniques for obtaining upper bounds on the Bayesian
Price of Anarchy. Thus, any improvement on our upper bound for the \dpa\ must
use either specific properties of the (Bayes-Nash equilibrium) distribution
$\mathcal{D}$, or a completely new approach altogether. 
%The same holds for improvements of the upper bound for the \upa\ below $2$ --
%and towards the only known lower bound of $\frac{e}{e-1}$
%from~\cite{Markakis12} (should it be worst-case).

\iffalse
Similarly, for Uniform Price Auctions no bound strictly better than $2$ can be
proven via Theorem~\ref{theorem:generic-stmt}. 

\begin{theorem}\label{thm:lb-smoothness-dpa}
Let $k = 1$. There exists an instance of the \upa\ with $2$ bidders, submodular
valuation functions $\vec{v}$ and bidding vectors $\vec{b}$ such that for every
possible deviation $\vec{b}'_i$ of bidder $i = 1,2$ we have
%%%
\begin{equation}\label{eq:lb-dpa-smooth}
\sum_{i = 1}^2 u_i^{\vec{v_i}}(\vec{b}_i',\vec{b}_{-i}) < \lambda SW(\vec{v},\vec{x}^{\vec{v}}) - \mu \beta_1(\vec{b}).
\end{equation}
for $\lambda > 1 - x + \mu x$ for every $x \in (0,1)$ and every $\mu \ge 0$. 
\end{theorem}

Recall that Theorem~\ref{theorem:generic-stmt} gives a bound of
$(\mu+1)/\lambda$ on the Price of Anarchy for Uniform Price Auctions. By
choosing $x = \frac12$, \eqref{eq:lb-dpa-smooth} holds for every $\lambda >
\frac12 (1+\mu)$ and $\mu \ge 0$. Thus, all bounds on the Price of Anarchy
derived through Theorem~\ref{theorem:generic-stmt} must satisfy $\lambda \le
\frac12 (1+\mu)$, which gives a lower bound of $2$. 

\fi

%Finally, in {\bf Appendix C} we also provide an example, using a discretized
%strategy space, demonstrating that Bayes-Nash equilibria are not always
%efficient in the \dpa.

\section{Smoothness and its Implications}
\label{sec:z-smoothness}

In this section we discuss implications of our developed bounds, on {\em
simultaneous and sequential compositions} of auction mechanisms, in the spirit
that they have been studied in previous
works~\cite{Christodoulou08,Bhawalkar11,Hassidim11,Syrgkanis12,ST13}. In
particular, we elaborate on the connections of our results to the {\em
smoothness framework}, recently developed by Syrgkanis and Tardos~\cite{ST13},
for bounding the inefficiency of such {\em composite} auction mechanisms. A {\em
simultaneous composition} of auction mechanisms is a mechanism ${\cal M}$
itself, wherein each of $m$ distinct goods is allocated in a separate auction,
independently of -- and simultaneously to all other goods. This setting was
first investigated by Christodoulou, Kovacs and Schapira
in~\cite{Christodoulou08}, where the authors studied the inefficiency of
simultaneous Second-Price Auctions. In the setting of~\cite{Christodoulou08},
bidders have combinatorial valuation functions, defined over subsets of goods;
however, they must bid separately (and simultaneously) for each distinct good,
in a separate Second-Price auction. In the same spirit, a {\em sequential
composition} of auction mechanisms is a mechanism ${\cal M}$, wherein each of
$m$ distinct goods is allocated in a separate auction and these auctions are
carried out sequentially. The inefficiency of this setting was first
investigated by Paes Leme, Syrgkanis and Tardos~\cite{Syrgkanis12a} and,
subsequently, by Syrgkanis and Tardos~\cite{Syrgkanis12,ST13} in the incomplete
information setting. In the context of sequential auctions, the inefficiency
upper bounds apply for notions of equilibria under incomplete information,
including {\em Perfect Bayesian} and {\em Sequential Equilibria} -- we refer the
reader to~\cite{Syrgkanis12} for a detailed discussion.

The results we describe here, concern simultaneous and sequential compositions
of {\em multi-unit} mechanisms, i.e., Discriminatory and Uniform Price Auctions;
the emerging composite mechanism enables the handling of bidders with {\em
multi-unit combinatorial} preferences over {\em multisets} of items. In the
simultaneous composition setting, every distinct auction is used for auctioning
multiple units of a single good in limited supply. In the sequential composition
setting, all available units of each distinct good are auctioned in a single
multi-unit auction; a sequence of such auctions, one per good, gives rise to the
sequential mechanism. Using the smoothness analysis framework of~\cite{ST13}, we
show how our improved bounds from the previous section, for bidders with
symmetric submodular and subadditive valuation functions, yield improved bounds
for composite mechanisms, involving bidders with valuation functions over
multisets of items.

\subsection{Valuation Functions over Multisets}
\label{subsection:multiset-vfs}

Before reviewing briefly the smoothness analysis framework, let us describe
first the classes of valuation functions (as given in~\cite{ST13}), to which the
analysis of simultaneous/sequential compositions can be applied. Consider a set
$[m]$ of $m$ distinct goods, each good $j\in[m]$ being available in limited
supply of $k_j\geq 1$ identical units. An allocation for each bidder $i$ is a
multiset of items, denoted by $\vec{x}_i=(x_i(1),\dots,x_i(m))$, where
$x_i(j)\in\{0,\dots, k_j\}$ is the number of units from good $j\in[m]$. Let
${\cal U}=\Bigl[\times_{j\in[m]}(\{0\}\cup[k_j])\Bigr]$ denote the set of all
allocation vectors $\vec{x}$ corresponding to distinct multisets of items. We
consider  valuation functions over multisets, $v:{\cal U}\mapsto\mathbb{R}^+$,
with $v(\vec{0})=0$.

\bigskip

\noindent{\bf Simultaneous Composition.} In the simultaneous auctions setting,
the smoothness framework of~\cite{ST13} allows usage of our developments from
the previous section, if the valuation functions over multisets satisfy certain
conditions ({\bf Theorem 5.1} of~\cite{ST13}), in particular if: 

\medskip

\noindent{\em The restriction $v^j$ of the valuation function $v$, to each good
$j\in[m]$, belongs to a class $V^j$ and there exist $\rho\geq 1$ sets of
valuation functions $\Bigl\{v^{j,\ell}\in V^j\,\Bigl|\Bigr.\,j\in[m]\Bigr\}$,
for $\ell=1,\dots,\rho$, such that
$v(\vec{x})=\max_{\ell}\sum_jv^{j,\ell}(x(j))$.} 

\medskip

This condition generalizes the definition of the combinatorial class of {\em
fractionally subadditive}~\cite{Feige09} valuation functions, to the premise of
multi-unit availability of each good. The class of fractionally subadditive
valuation functions contains functions $v:2^{[m]}\mapsto\mathbb{R}^+$ and is
identical to the class of \xos\ functions, identified in~\cite{Lehmann06}. The
class \xos\ is a strict superset of submodular functions over $2^{[m]}$ and a
strict subset of subadditive ones over $2^{[m]}$; a function $v$ belongs to
\xos\ (equivalently, is fractionally subadditive) if there exist $\rho$ {\em
additive} functions $\alpha^{\ell}:2^{[m]}\mapsto\mathbb{R}^+$,
$\ell=1,\dots,\rho$, with $\alpha^{\ell}(X)=\sum_{j\in X}\alpha^{\ell}(\{j\})$
for any $X\in 2^{[m]}$, so that $v(X)=\max_{\ell}\alpha^{\ell}(X)$.

A class of functions that satisfies the aforementioned condition trivially, is
the additive composition $v(\vec{x})=\sum_jv^j(x(j))$ of any set
$\Bigl\{v^j\,\Bigl|\Bigr.\,j\in[m]\Bigl\}$ of symmetric submodular multi-unit
functions $v^j$ per good. As in~\cite{ST13}, the condition is satisfied by the
more general class of submodular valuation functions over multisets, defined as
follows~\cite{Kapralov12,Krysta13}:

\begin{definition}
\label{multiset-submod}
For any $\ell=1,\dots,m$ let $\vec{e}_{\ell}$ be the unary vector with
$e_{\ell}(\ell)=1$ and $e_{\ell}(j)=0$, for $j\neq\ell$. Let $\vec{x}$ and
$\vec{y}$ denote two multisets from ${\cal U}$, so that $\vec{x}\leq\vec{y}$,
where ``$\leq$'' holds component-wise. Then, a non-decreasing function $v:{\cal
U}\mapsto\mathbb{R^+}$ is submodular if
$v(\vec{x}+\vec{e}_{\ell})-v(\vec{x})\geq
v(\vec{y}+\vec{e}_{\ell})-v(\vec{y})$.
\label{definition:multiset-sm}
\end{definition}

The inefficiency upper bounds that we present for this class of functions
improve upon the ones given in~\cite{ST13}. \vorem{Moreover, we can also derive
upper bounds for valuation functions that are ``{\em fractionally subadditive
compositions}'' of symmetric subadditive functions, in the following sense:}

\vorem{
\begin{definition}
\label{xos-subadd}
A valuation function $v:{\cal U}\mapsto\mathbb{R}^+$ is a fractionally
subadditive composition of symmetric subadditive functions, if there exist
$\rho\geq 1$ sets of symmetric subadditive functions
$\Bigl\{v^{j,\ell}:[k_j]\mapsto\mathbb{R}^+\,\Bigl|\Bigr.\,j\in[m]\Bigr\}$, for
$\ell=1,\dots,\rho$, so that $v(\vec{x})=\max_{\ell}\sum_jv^{j,\ell}(x(j))$.
\label{definition:multiset-sa}
\end{definition}
}

\vorem{
Let us note here that the class described in
Definition~\ref{definition:multiset-sa}, is strictly contained in the class of
subadditive valuation functions over multisets, which consists of the functions
satisfying $v(\vec{x})\leq v(\vec{y})+v(\vec{z})$, for
$\vec{x}=\vec{y}+\vec{z}$. The results described below ({\bf
Table~\ref{table:upper-bounds2}}) are valid for the classes of functions
described in Definitions~\ref{definition:multiset-sm}
and~\ref{definition:multiset-sa}.
}

\bigskip

\noindent{\bf Sequential Composition.} For sequential compositions, the
smoothness framework of~\cite{ST13} can yield interesting inefficiency bounds
for valuation functions $v:\times_j(\{0\}\cup[k_j])\mapsto\mathbb{R}^+$ that can
be expressed as $v(\vec{x})=\max_jv^j(x(j))$, for an appropriate selection of a
set of $m$ symmetric multi-unit valuation functions
$\Bigl\{v^j\,\Bigl|\Bigr.\,j\in[m]\Bigr\}$. Notice that this definition is
reminiscent of the class of {\em Unit-Demand} valuation functions, for the
combinatorial setting with a single unit per good (i.e., $k_j=1$). Indeed, it
was shown very recently in~\cite{Feldman13} that, even for {\em
Gross-Substitute} valuation functions (a class that strictly contains
Unit-Demand and is strictly contained in the submodular class~\cite{Lehmann06}),
sequential first-price auctions become prohibitively inefficient.

The results that we present for sequential compositions ({\bf
Table~\ref{table:upper-bounds2}}) are valid for such {\em ``Unit-Demand
Compositions''} of symmetric submodular and subadditive valuation functions.

\subsection{Smoothness Upper Bounds}

We first review the smoothness definitions introduced in \cite{ST13} (adapted to
our multi-unit auction setting). As introduced earlier, let $P_i(\vec{b})$ refer
to the payment of bidder $i$ under bidding profile $\vec{b}$. 

\begin{definition}[\cite{ST13}]\label{def:smooth}
A mechanism ${\cal M}$ is \emph{$(\lambda, \mu)$-smooth} for $\lambda > 0$ and
$\mu \ge 0$ if for any valuation profile $\vec{v}$ and for any bidding profile
$\vec{b}$ there exists a randomized bidding strategy $B_i'\equiv B_i'(\vec{v},
\vec{b}_i)$ for each bidder $i$, such that 
%%%
$$
\sum_{i \in [n]} 
\mathbb{E}_{\vec{b}_i'\sim B_i'}\Bigl[
u_i^{\vec{v}_i}(\vec{b}'_i, \vec{b}_{-i})
\Bigr]\ge 
\lambda SW(\vec{v}, \vec{x}^{\vec{v}}) - \mu \sum_{i \in [n]} P_i(\vec{b}).
$$
%%%
\end{definition}

In~\cite{ST13} it is shown that if a mechanism is $(\lambda, \mu)$-smooth, then
several results follow automatically. One such result concerns upper bounds on
the Price of Anarchy. Another result is that the smoothness property is retained
under \emph{simultaneous} and 
%\otrem{(the normal form representation of)} 
\emph{sequential compositions}. 
%In such compositions there are $m$ mechanisms with separate allocation and
%payment rules. Every bidder specifies for each mechanism a bidding profile. In
%the simultaneous composition, these profiles are submitted simultaneously,
%while in the sequential composition, they are submitted sequentially. each
%bidder expresses his valuation for the $m$-tuples of outcomes of the mechanisms
%in a restricted way. More precisely, in the simultaneous composition it is
%assumed that the valuation function of each bidder is \emph{fractionally
%subadditive} across the $m$ mechanisms (see \cite{ST13} for formal
%definitions). In the sequential composition, the valuation function of each
%bidder is defined as the maximum of his valuations over these mechanisms. 
We summarize the main composition results of \cite{ST13} in the theorem below. 

\begin{theorem}
[Theorems 4.2, 4.3, 5.1, and 5.2 in \cite{ST13}]\label{thm:st-ifm1} \mbox{}
\begin{compactenum}[(i)]
%%%
\item If ${\cal M}$ is $(\lambda, \mu)$-smooth, then the correlated (or mixed
Bayesian) Price of Anarchy of ${\cal M}$ is at most $\max\{1, \mu\}/\lambda$.
%%%
\item If ${\cal M}$ is a simultaneous (res\-pec\-ti\-ve\-ly, sequential)
composition of $m$ $(\lambda, \mu)$-smooth mechanisms, then ${\cal M}$ is
$(\lambda, \mu)$-smooth (resp., $(\lambda, \mu+1)$-smooth).
%%%
\end{compactenum}
\end{theorem}

By exploiting our Key Lemma, we can show that the \dpa\ is smooth.
Theorem~\ref{thm:smooth1} in combination with Theorem~\ref{thm:st-ifm1} leads to
the composition results stated in Table~\ref{table:upper-bounds2} (these bounds
are achieved for $\alpha = 1$).

\begin{theorem}\label{thm:smooth1}
The Discriminatory Auction is $(\lambda, \mu)$-smooth (both in the standard and
uniform bidding format) with
%%%
\begin{compactenum}[(i)]
%%%
\item $(\lambda, \mu) = (\alpha \left(1 - e^{-1/\alpha}\right), \alpha)$ for
submodular valuation functions, and 
%%%
\vorem{
\item $(\lambda, \mu) = (\frac{\alpha}{2} \left(1 - e^{-1/\alpha}\right),
\alpha)$ for subadditive valuation functions.
}
\end{compactenum}
\end{theorem}
In particular, we obtain as a corollary, by using Theorem \ref{thm:st-ifm1},
improved upper bounds for the classes of valuation functions defined in
Definitions \ref{multiset-submod} \vorem{and \ref{xos-subadd}.}
 
\begin{proof}
Note that for the \dpa\ we have 
%%%
$$
\sum_{i \in [n]} P_i(\vec{b})
=
\sum_{i \in [n]} \sum_{j = 1}^{x_i(\vec{b})} b_i(j) 
= 
\sum_{j = 1}^k \beta_j(\vec{b}).
$$
%%%
Note that Lemma~\ref{lem:smoothness-key} holds for every bidder $i \in [n]$ with
$\lambda$ and $\mu$ as stated in Theorem~\ref{thm:smooth1} (see also the proof
of Theorem~\ref{thm:main-sm}). By invoking Lemma~\ref{lem:smoothness-key} and
summing inequality \eqref{eq:rand-smooth} over all bidders, we obtain
%%%
\begin{align*}
\sum_{i \in [n]}\mathbb{E}_{\vec{b}_i'\sim B_i'}\Bigl[
u_i^{\vec{v}_i}(\vec{b}'_i, \vec{b}_{-i})\Bigr] 
& \ge 
\lambda \sum_{i \in [n]} v_i(x_i^\vec{v}) - 
\mu \sum_{i \in [n]} \sum_{j = 1}^{x^\vec{v}_i}\beta_j(\vec{b}_{-i}) 
 \ge 
\lambda SW(\vec{v}, \vec{x}^{\vec{v}}) - 
\mu \sum_{j = 1}^k \beta_j(\vec{b}), 
\end{align*}
%%%
where the last inequality holds because for every bidder $i$,
$\beta_j(\vec{b}_{-i}) \le \beta_j(\vec{b})$ for every $j = 1, \dots, k$, and
$\sum_{i \in [n]} x_i(\vec{b}) = k$ and $\beta_j(\vec{b})$ is non-decreasing in
$j$.
\end{proof}

\bigskip

\begin{table}[t]
\begin{center}
\begin{tabular}{c@{\quad}|@{\quad}c@{\quad}c@{\qquad}c@{\quad}c}
\multirow{3}{*}{{\bf Valuation Functions}$^*$} & \multicolumn{2}{l}{{\bf Discriminatory Auction}} & \multicolumn{2}{c}{{\bf \upa}} \\
 & {\em Simultaneous } & {\em Sequential } & {\em Simultaneous } & {\em Sequential } \\
\hline
\multirow{3}{*}{{\em Submodular}} & & & & \\
 & $\displaystyle\frac{e}{e-1}$ & $\displaystyle\frac{2e}{e-1}$ & \multicolumn{2}{c}{$3.1462$}\\
 & & & & \\
\multirow{3}{*}{\vorem{{\em Subadditive}}} & & & & \\
 & \vorem{$\displaystyle\frac{2e}{e-1}$} & \vorem{$\displaystyle\frac{4e}{e-1}$} & \multicolumn{2}{c}{\vorem{$6.2924$}}\\
 & & & &
\end{tabular}
\end{center}
\caption{
Upper bounds on the Bayesian Price of Anarchy for compositions of Discriminatory
and Uniform Price Auctions. \vorem{$^*$The valuation function classes mentioned
here correspond (by abuse of their names) to the ones described in
subsection~\ref{subsection:multiset-vfs}}.
}
\label{table:upper-bounds2}
\end{table}

\iffalse
Note that the bids in the support of the random deviation $\vec{b}'_i$ with
$\alpha = 1$ in the proof of Lemma~\ref{lem:smoothness-discriminatory} satisfies
%%%
$$
\max_{\vec{b}_{-i}} P_i(\vec{b}'_i, \vec{b}_{-i}) = \left(1-\frac{1}{e}\right) \frac{v_i(x^{\vec{v}}_i)}{x^{\vec{v}}_i}\le \max_{x \in [k]} v_i(x).
$$
%%%
where $P_i(\vec{b}) = \sum_{j = 1}^{x_i(\vec{b})} b_i(j)$ refers to the payment
of bidder $i$ under bidding profile $\vec{b}$.
Lemma~\ref{lem:smoothness-discriminatory} above therefore satisfies the stronger
definition of \emph{conservative smoothness} according to Definition~8.1 in
\cite{ST13} which allows us to use Theorem 8.2 in \cite{ST13}: 

\begin{corollary}
The correlated price of anarchy and the Bayesian price of anarchy of
discriminatory multi-unit auction with submodular valuation functions and where
each bidder has a budget constraint is at most $\frac{e}{e-1}$.
\end{corollary}
\fi

For auction mechanisms where one needs to impose a no-overbidding assumption, a
different smoothness notion is introduced in \cite{ST13}. Given a mechanism
${\cal M}$, define bidder $i$'s \emph{willingness-to-pay} as the maximum payment
he could ever pay conditional to being allocated $x$ units, i.e.,
$W_i(\vec{b}_i, x) = \max_{\vec{b}_{-i}: x_i(\vec{b}) = x} P_i(\vec{b})$.
%%%
\begin{definition}[\cite{ST13}]
A mechanism ${\cal M}$ is \emph{weakly} $(\lambda, \mu_1, \mu_2)$-smooth for
$\lambda > 0$ and $\mu_1, \mu_2 \ge 0$ if for any valuation profile $\vec{v}$
and for any bidding profile $\vec{b}$ there exists a randomized bidding profile
$B_i'\equiv B_i'(\vec{v}, \vec{b}_i)$, for each bidder $i$, such that 
%%%
$$
\sum_{i \in [n]}
\mathbb{E}_{\vec{b}_i'\sim B_i'}\Bigl[
u_i^{\vec{v}_i}(\vec{b}'_i, \vec{b}_{-i})\Bigr]
\ge 
\lambda SW(\vec{v}, \vec{x}^{\vec{v}}) - \mu_1 \sum_{i \in [n]} P_i(\vec{b}) - \mu_2 \sum_{i \in [n]} W_i(\vec{b}_i, x_i(\vec{b})).
$$
%%%
\end{definition}

\noindent Syrgkanis and Tardos \cite{ST13} establish the following results. 
\iffalse
\begin{theorem}[Theorem~7.4 in \cite{ST13}]\label{thm:st-ifm3}
Let ${\cal M}$ be a mechanism that is weakly $(\lambda, \mu_1, \mu_2)$-smooth
and satisfies the no-overbidding assumption. Then the correlated (resp., mixed
Bayesian) Price of Anarchy is at most $(\mu_2 + \max\{1, \mu_1\})/\lambda$.
\end{theorem}
\fi
\begin{theorem}[Theorems~7.4, C.4 and C.5 in \cite{ST13}]\mbox{}
%%%
\begin{compactenum}[(i)]
%%%
\item If ${\cal M}$ is $(\lambda, \mu_1, \mu_2)$-weakly smooth, then the
correlated (or mixed Bayesian) Price of Anarchy of ${\cal M}$ is at most $(\mu_2
+ \max\{1, \mu_1\})/\lambda$.
%%%
\item If ${\cal M}$ is a simultaneous (resp., sequential) composition of $m$
$(\lambda, \mu_1, \mu_2)$-weakly smooth mechanisms, then ${\cal M}$ is
$(\lambda, \mu_1, \mu_2)$-weakly smooth (resp., $(\lambda, \mu_1+1,
\mu_2)$-weakly smooth).
%%%
\end{compactenum}
\end{theorem}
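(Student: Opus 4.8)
\medskip
\noindent\textbf{Proof plan.}\ The plan is to reconstruct the argument of Syrgkanis and Tardos~\cite{ST13}, which runs along the same lines as Theorem~\ref{thm:st-ifm1} but additionally keeps track of the willingness-to-pay term. For part~(i) I would fix a valuation profile $\vec{v}$ and a correlated equilibrium $\mathcal{D}$ over non-overbidding bidding profiles. For every bidder $i$ and every recommended bid $\vec{b}_i$, the weak-smoothness deviation $\vec{b}'_i=\vec{b}'_i(\vec{v},\vec{b}_i)$ is a function of $i$'s own recommendation only, hence a feasible deviation, so the equilibrium condition gives $\pe_{\vec{b}\sim\mathcal{D}}[u_i^{\vec{v}_i}(\vec{b})]\ge\pe_{\vec{b}\sim\mathcal{D}}[u_i^{\vec{v}_i}(\vec{b}'_i,\vec{b}_{-i})]$. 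Summing over $i$, using $\sum_i u_i^{\vec{v}_i}(\vec{b})=SW(\vec{v},\vec{x}(\vec{b}))-\sum_i P_i(\vec{b})$ on the left and plugging the weak-smoothness inequality in pointwise on the right, yields $\pe_{\vec{b}}[SW(\vec{v},\vec{x}(\vec{b}))]-\pe_{\vec{b}}[\sum_i P_i(\vec{b})]\ge\lambda\,SW(\vec{v},\vec{x}^{\vec{v}})-\mu_1\pe_{\vec{b}}[\sum_i P_i(\vec{b})]-\mu_2\pe_{\vec{b}}[\sum_i B_i(\vec{b}_i,x_i(\vec{b}))]$.

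The next step is to control the two cost terms using no-overbidding: $P_i(\vec{b})\le\sum_{j\le x_i(\vec{b})}b_i(j)\le v_i(x_i(\vec{b}))$, and since this bound holds for every opponent profile it also gives $B_i(\vec{b}_i,x_i(\vec{b}))\le v_i(x_i(\vec{b}))$; hence both the expected total payment $P$ and the expected total willingness-to-pay are at most the expected welfare $S$ of $\mathcal{D}$. Writing $O$ for the optimum welfare, rearranging gives $(1+\mu_2)S+(\mu_1-1)P\ge\lambda O$, and a two-case split finishes it: if $\mu_1\le 1$ the middle term is non-positive, so $S\ge\lambda O/(1+\mu_2)$; if $\mu_1>1$, bounding $P\le S$ gives $(\mu_1+\mu_2)S\ge\lambda O$. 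In either case $O/S\le(\mu_2+\max\{1,\mu_1\})/\lambda$. For mixed Bayes-Nash equilibria I would run the same computation with an outer expectation over $\vec{v}\sim\pi$, using the standard device of letting $\vec{b}'_i$ target the optimal allocation of the profile $(\vec{v}_i,\vec{w}_{-i})$ for a fresh independent resample $\vec{w}_{-i}\sim\pi_{-i}$, so that the deviation depends on $\vec{v}_i$ alone, and then renaming $\vec{w}_{-i}$ as the true $\vec{v}_{-i}$ inside every expectation by independence.

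For part~(ii), in the simultaneous case I would use the optimal composite outcome, whose component for bidder $i$ in mechanism $\ell$ I call $x_i^{\vec{v},\ell}$; fractional subadditivity of $v_i$ across the $m$ mechanisms supplies additive supporting valuations $\hat{v}_i^{\ell}$, each dominated by $\mathcal{M}^\ell$'s single-mechanism valuation, with $\sum_\ell\hat{v}_i^{\ell}(x_i^{\vec{v},\ell})\ge v_i(x_i^{\vec{v}})$ and $v_i(\cdot)\ge\sum_\ell\hat{v}_i^{\ell}(\cdot^{\ell})$ on every outcome. Bidder $i$ deviates by playing in each $\mathcal{M}^\ell$ that mechanism's weak-smoothness deviation against $\hat{v}_i^{\ell}$ targeting $x_i^{\vec{v},\ell}$; summing the per-mechanism inequalities over $\ell$ and over $i$, using $\sum_\ell\hat{v}_i^{\ell}(x_i^{\vec{v},\ell})\ge v_i(x_i^{\vec{v}})$ on the left and the fact that for independent mechanisms the composite payment and willingness-to-pay split exactly as $\sum_\ell P_i^\ell=P_i$ and $\sum_\ell B_i^\ell=B_i$, reproduces the weak $(\lambda,\mu_1,\mu_2)$-smoothness inequality for the composition. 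In the sequential case the bidder's valuation is the maximum over mechanisms; letting $\ell^*_i$ be the maximizing mechanism, bidder $i$ deviates by following his equilibrium strategy in rounds $1,\dots,\ell^*_i-1$, playing the $\mathcal{M}^{\ell^*_i}$ weak-smoothness deviation in round $\ell^*_i$ (feasible because earlier outcomes are observed before round $\ell^*_i$), and bidding trivially afterwards; his value is lower-bounded by the round-$\ell^*_i$ contribution, handled by smoothness, while the payments he incurs in the earlier rounds add, over all bidders, at most one extra copy of $\sum_i P_i(\vec{b})$, which is exactly what turns $\mu_1$ into $\mu_1+1$.

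I expect the main obstacle to be the Bayesian part of~(i): forcing $\vec{b}'_i$ to be measurable with respect to $\vec{v}_i$ alone requires the independent-resampling construction, and one has to verify carefully that renaming the resampled opponent types as the true ones leaves every term of the summed inequality intact -- which works precisely because weak smoothness is a pointwise statement over all profiles rather than an averaged one. The analogous subtlety in the sequential composition is confirming that the cost of the ``wait until round $\ell^*_i$'' phase is absorbed by precisely one extra $\sum_i P_i(\vec{b})$ and does not leak into the coefficient $\mu_2$; getting the information structure of the sequential game (what each bidder knows when) exactly right is the main bookkeeping burden there.
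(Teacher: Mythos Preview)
The paper does not supply its own proof of this theorem: it is stated purely as a citation of Theorems~7.4, C.4 and C.5 of~\cite{ST13}, and Appendix~D proves only Theorems~\ref{thm:smooth1} and~\ref{thm:smooth2}. So there is nothing in the paper to compare your proposal against.

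That said, your reconstruction is a faithful outline of the Syrgkanis--Tardos argument. The derivation in part~(i) --- summing the equilibrium inequalities, bounding both $\sum_i P_i$ and $\sum_i B_i$ by the welfare via no-overbidding, and the two-case split on $\mu_1\lessgtr 1$ --- is exactly how~\cite{ST13} proceeds, and your resampling device for the Bayesian case is the standard one used there. For part~(ii), the use of additive supporting valuations for the simultaneous composition and the ``wait for round $\ell_i^*$'' deviation for the sequential one, with the earlier-round payments contributing the extra $+1$ on $\mu_1$, matches the structure of Theorems~C.4 and~C.5 in~\cite{ST13}. Your identified subtleties (measurability of $\vec{b}'_i$ in the Bayesian extension, and that the sequential waiting cost hits only $\mu_1$ and not $\mu_2$) are the right places to be careful, but neither is a gap.
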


Using our Key Lemma, we can show that the Uniform Price Auction is weakly
smooth. As a consequence, we obtain the composition results stated in
Table~\ref{table:upper-bounds2} (these bounds are achieved for $\alpha =
-1/({\cal W}_{-1}(-1/e^2) + 2) \approx 0.87$).

\begin{theorem}
\label{thm:smooth2}
The Uniform Price Auction is weakly $(\lambda, \mu_1, \mu_2)$-smooth (both in
the standard and uniform bidding format) with
%%%
\begin{compactenum}[(i)]
%%%
\item $(\lambda, \mu_1, \mu_2) = (\alpha \left(1 - e^{-1/\alpha}\right), 0,
\alpha)$ for submodular valuation functions, and 
%%%
\vorem{
\item $(\lambda, \mu_1, \mu_2) = (\frac{\alpha}{2} \left(1 -
e^{-1/\alpha}\right), 0, \alpha)$ for subadditive valuation functions.
}
%%%
\end{compactenum}
%%%
\end{theorem}

\begin{proof}
The following transformation will be helpful in the poof of
Theorem~\ref{thm:smooth2}. Let $\vec{b}$ be an arbitrary bidding profile. We
derive a uniform bidding profile $\bar{\vec{b}}$ from $\vec{b}$ as follows: Let
$c_i = b_i(x_i(\vec{b}))$ be the last winning bid of bidder $i \in [n]$; for
ease of notation, we adopt the convention that $b_i(0) = 0$. Define
$\bar{\vec{b}}_i$ as the vector that is $c_i$ on the first $x_i(\vec{b})$
entries and zero everywhere else. Clearly, $\bar{\vec{b}}$ is a uniform bidding
profile. 

\begin{lemma}\label{lem:prop}
Let $\bar{\vec{b}}$ be the uniform bidding profile derived from $\vec{b}$. Then
for the Uniform Price Auction, the following holds for every bidder $i \in [n]$:
%%%
\begin{enumerate}[(i)]
%%%
\item $x_i(\bar{\vec{b}}) = x_i(\vec{b})$;
%%%
\item $W_i(\bar{\vec{b}}_i, x_i(\bar{\vec{b}})) = W_i({\vec{b}}_i, x_i({\vec{b}}))$.
%%%
\end{enumerate}
%%%
\end{lemma}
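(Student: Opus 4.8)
The plan is to establish part (i) first and then deduce part (ii) from it together with an explicit description of the willingness‑to‑pay; essentially all of the work is in (i).

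For (i), recall that under any profile the Uniform Price Auction sorts all $nk$ marginal bids and lets the $k$ largest win (ties broken by the fixed tie‑breaking rule). Write $x_i=x_i(\vec b)$, $c_i=b_i(x_i)$, and let $p=p(\vec b)$ be the highest losing bid, so that every winning bid is at least $p$ and every losing bid is at most $p$; in particular $c_i\ge p$ for every winning bidder. I would process the bidders one at a time: it suffices to show that replacing $\vec b_i$ by $\bar{\vec b}_i$ while leaving all other bidders' current bids unchanged preserves the whole allocation $\vec x$. Iterating this over $i=1,\dots,n$ then gives the lemma, since after the first $i-1$ replacements bidder $i$'s bid is still $\vec b_i$ and his allocation is still $x_i$, so his last winning bid is still $c_i$ and the target vector $\bar{\vec b}_i$ is unchanged. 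For a single replacement, observe that $\bar{\vec b}_i$ is obtained from $\vec b_i$ by weakly lowering every coordinate: the first $x_i$ coordinates drop from $b_i(j)\ge c_i$ to $c_i$, and the remaining $k-x_i$ coordinates, which were losing bids (hence $\le p$), drop to $0$. Thus the allocation could change only if one of bidder $i$'s first $x_i$ coordinates, now equal to $c_i\ge p$, falls out of the top $k$, or one of his remaining coordinates, now $0$, enters it; both are impossible because, apart from bidder $i$'s coordinates, the only marginal bids in play are those of the other bidders, which under $\vec b$ split into exactly $k-x_i$ winning bids (each $\ge p$) and the rest losing (each $\le p$). A short case distinction on whether $c_i>p$ or $c_i=p$ then confirms that the top‑$k$ set under $(\bar{\vec b}_i,\vec b_{-i})$ consists precisely of bidder $i$'s $x_i$ coordinates (all now $c_i\ge p$) together with the other bidders' $\vec b$‑winning bids, whence $x_{i'}(\bar{\vec b}_i,\vec b_{-i})=x_{i'}(\vec b)$ for all $i'$.

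For (ii), I would first record that for uniform pricing $P_i(\vec b)=x_i(\vec b)\,p(\vec b)$, and that whenever bidder $i$ wins $x\ge 1$ units his $x$‑th bid $b_i(x)$ is a winning bid, hence $b_i(x)\ge p(\vec b)$; therefore $P_i(\vec b)\le x\,b_i(x)$, and this value is attained by letting the other bidders submit enough marginal bids exactly at level $b_i(x)$ with the resulting tie resolved in bidder $i$'s favour. Hence $B_i(\vec b_i,x)=x\,b_i(x)$ for $x\ge 1$ (and $B_i(\vec b_i,0)=0$), i.e.\ the willingness‑to‑pay depends on $\vec b_i$ only through the coordinate $b_i(x)$. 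Now by (i) we have $x_i(\bar{\vec b})=x_i(\vec b)=:x$; if $x=0$ both sides of (ii) vanish, and if $x\ge 1$ then the $x$‑th coordinate of $\bar{\vec b}_i$ equals $c_i=b_i(x_i(\vec b))$ by construction, so $B_i(\bar{\vec b}_i,x)=x\,c_i=B_i(\vec b_i,x_i(\vec b))$.

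The main obstacle is the tie‑handling in (i): when $c_i=p(\vec b)$ the replacement places several of bidder $i$'s marginal bids exactly at the threshold, and one must check that the fixed tie‑breaking rule still awards bidder $i$ exactly $x_i$ of them and distributes the remaining threshold slots among the other bidders exactly as under $\vec b$. For a tie‑breaking rule given by a fixed priority order over bidders this goes through, since bidder $i$ already won a threshold bid under $\vec b$ and the extra threshold slots freed by removing his old above‑threshold coordinates are exactly matched by the extra threshold bids he now submits. The only genuinely degenerate case is $p(\vec b)=0$ together with $b_i(1)>0$, where the replacement destroys some of bidder $i$'s strictly positive bids; this is excluded without loss of generality (e.g.\ by a vanishing perturbation that makes all submitted bids positive, consistent with the sufficiently fine discretization of the strategy space assumed throughout).
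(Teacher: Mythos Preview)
Your argument is correct, and for part (ii) it coincides with the paper's: both compute $B_i(\vec b_i,x)=x\,b_i(x)$ and then chain through (i) and the definition of $\bar{\vec b}_i$. For part (i) the paper takes a shorter route than you do. Rather than replacing bidders one at a time and tracking the threshold across steps, it replaces all bidders simultaneously and observes in one line that every bid which won under $\vec b$ now has value $c_i\ge\beta_1$ (the smallest winning bid of $\vec b$) while every bid which lost now has value $0\le\beta_0$; hence the top-$k$ set is unchanged. Your iterative decomposition is valid but buys nothing additional---it simply re-derives this comparison $n$ times and forces you to argue that each intermediate profile still has allocation $\vec x$, which is exactly the same inequality applied to a hybrid profile. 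Your treatment of tie-breaking is more explicit than the paper's (which silently assumes a consistent rule), but the ``genuinely degenerate'' case you single out at the end is not actually an exception: if $p(\vec b)=0$ and $b_i(1)>0$ then $b_i(1)$ is a winning bid, so $x_i\ge 1$; if moreover $c_i=b_i(x_i)=0$ then $\beta_1=0$ and after replacement all formerly-winning bids are $\ge 0$ and all formerly-losing bids are $0$, so any fixed priority tie-breaking rule still returns the same allocation. No perturbation is needed.
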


We first prove weak smoothness for the uniform bidding format and then extend
this result to the standard bidding format via a coupling argument. 

It is not hard to verify that for the Uniform Price Auction we have
$W_i(\vec{b}_i, x) = x b_i(x)$. As before, exploiting
Lemma~\ref{lem:smoothness-key} and summing inequality \eqref{eq:rand-smooth}
over all bidders we obtain
%%%
\begin{align*}
\sum_{i \in [n]} \mathbb{E}_{\vec{b}_i'\sim B_i'}\Bigl[
u_i^{\vec{v}_i}(\vec{b}'_i, \vec{b}_{-i})\Bigr] 
& \ge 
\lambda SW(\vec{v}, \vec{x}^{\vec{v}}) - 
\mu \sum_{j = 1}^k \beta_j(\vec{b}). 
\end{align*}
%%%
If $\vec{b}$ is a uniform bidding profile then the claim follows because 
%%%
\begin{equation}\label{eq:rel-B}
\sum_{j = 1}^k \beta_j(\vec{b}) 
= \sum_{i \in [n]} x_i(\vec{b}) b_i(x_i(\vec{b}))
= \sum_{i \in [n]} W_i(\vec{b}_i, x_i(\vec{b})).
\end{equation}

Note that for the standard bidding format, the first equality would be false
because we can only infer that $\sum_{j} \beta_j(\vec{b}) \ge \sum_{i}
x_i(\vec{b}) b_i(x_i(\vec{b}))$. However, the following work-around establishes
weak smoothness for the Uniform Price Auction and the standard bidding format. 

Note that in general $P_i(\bar{\vec{b}}) \neq P_i(\vec{b})$ because all losing
bids in $\bar{\vec{b}}$ are zero. However, the above two properties turn out to
be sufficient to prove weak smoothness in the standard bidding format: Let
$\bar{\vec{b}}$ be the uniform bidding profile that we obtain from $\vec{b}$ as
described above. Applying Lemma~\ref{lem:smoothness-key} to the uniform bidding
profile $\bar{\vec{b}}$ and pricing rule $P_i(\bar{\vec{b}}) =
W_i(\bar{\vec{b}}, x_i(\bar{\vec{b}}))$ (which is discriminatory price
dominated), we conclude that for every bidder $i \in [n]$ there exists a random
uniform bidding strategy $B_i'$, $\vec{b}'_i$ such that 
%%%
\begin{equation}\label{eq:start}
\mathbb{E}_{\vec{b}_i'\sim B_i'}\Bigl[
v_i(x_i(\vec{b}'_i, \bar{\vec{b}}_{-i})) - W_i(\vec{b}'_i, x_i(\vec{b}'_i, \bar{\vec{b}}_{-i}))\Bigr] 
\ge 
\lambda v_i(x_i^{\vec{v}}) - 
\mu \sum_{j = 1}^{x^\vec{v}_i}\beta_j(\bar{\vec{b}}_{-i}).
\end{equation}
%%%
Note that by Lemma~\ref{lem:prop}
%%%
\begin{align}
u_i^{\vec{v}_i}(\vec{b}'_i, \vec{b}_{-i}) 
& = v_i(x_i(\vec{b}'_i, \vec{b}_{-i})) - P_i(\vec{b}'_i, \vec{b}_{-i}) \\
& \ge v_i(x_i(\vec{b}'_i, \vec{b}_{-i})) - W_i(\vec{b}'_i, x_i(\vec{b}'_i, \vec{b}_{-i})) \notag \\
&  = v_i(x_i(\vec{b}'_i, \bar{\vec{b}}_{-i})) - W_i(\vec{b}'_i, x_i(\vec{b}'_i, \bar{\vec{b}}_{-i})). \label{eq:coupling1}
\end{align}
%%%
By summing inequality \eqref{eq:start} over all bidders and exploiting
\eqref{eq:coupling1}, we thus obtain 
%%%
\begin{align*}
\sum_{i \in [n]}
\mathbb{E}_{\vec{b}_i'\sim B_i'}\Bigl[
u_i^{\vec{v}_i}(\vec{b}'_i, \vec{b}_{-i})
\Bigr] 
& \ge 
\lambda
SW(\vec{v}, \vec{x}^{\vec{v}}) - 
\mu
\sum_{j = 1}^{k}\beta_j(\bar{\vec{b}})  
 = 
\lambda SW(\vec{v}, \vec{x}^{\vec{v}}) - 
\mu \sum_{i \in [n]} W_i(\bar{\vec{b}}_i, x_i(\bar{\vec{b}})) \\
& = 
\lambda SW(\vec{v}, \vec{x}^{\vec{v}}) - 
\mu \sum_{i \in [n]} W_i(\vec{b}_i, x_i(\vec{b}))
\end{align*}
%%%
where the first equality follows from \eqref{eq:rel-B} because $\bar{\vec{b}}$
is a uniform bidding profile and the second equality holds because of
Lemma~\ref{lem:prop}.
\end{proof}

\bigskip

\begin{proofof}{Lemma~\ref{lem:prop}}
Let $\beta_1$ and $\beta_0$ refer to the smallest winning bid and largest losing
bid under bidding profile $\vec{b}$, respectively. Note that every winning bid
under $\bar{\vec{b}}$ is at least $\beta_1$ and can only increase under
$\vec{b}$. Also, every losing bid under $\bar{\vec{b}}$ is at most $\beta_0$
under $\vec{b}$. We conclude that a bid is winning under $\bar{\vec{b}}$ if and
only if it is winning under $\vec{b}$, which proves (i). 

Recall that $W_i(\bar{\vec{b}}_i, x_i(\bar{\vec{b}})) = x_i(\bar{\vec{b}})
\bar{b}_i(x_i(\bar{\vec{b}}))$. Now, using (i) and the definition of
$\bar{\vec{b}}_i$, we obtain: 
%%%
$
W_i(\bar{\vec{b}}_i, x_i(\bar{\vec{b}}))
= x_i(\bar{\vec{b}}) \bar{b}_i(x_i(\bar{\vec{b}}))
= x_i({\vec{b}}) \bar{b}_i(x_i({\vec{b}}))
= x_i({\vec{b}}) {b}_i(x_i({\vec{b}}))
= W_i({\vec{b}}_i, x_i({\vec{b}}))
$,
%%%
which shows (ii). 
\end{proofof}

%%%%%%%%%%%%%%%%%%%%%%%%%%%%%%%%%%%%%%%%%%%%%%%%%%%%%%%%%%%%%%%%%%%%%
%%%%%%%%%%%%%%%%%%%%%%%%%%%%%%%%%%%%%%%%%%%%%%%%%%%%%%%%%%%%%%%%%%%%%
%%%%%%%%%%%%%%%%%%%%%%%%%%%%%%%%%%%%%%%%%%%%%%%%%%%%%%%%%%%%%%%%%%%%%
\begin{comment}
\bigskip

\noindent {\bf \sout{Some additional results on mechanisms with \emph{budgets}
(see \cite{ST13}) can be inferred from Theorems~\ref{thm:smooth1} and
\ref{thm:smooth2}. We defer further details to the full version of the paper.}}
\end{comment}
%%%%%%%%%%%%%%%%%%%%%%%%%%%%%%%%%%%%%%%%%%%%%%%%%%%%%%%%%%%%%%%%%%%%
%%%%%%%%%%%%%%%%%%%%%%%%%%%%%%%%%%%%%%%%%%%%%%%%%%%%%%%%%%%%%%%%%%%%
%%%%%%%%%%%%%%%%%%%%%%%%%%%%%%%%%%%%%%%%%%%%%%%%%%%%%%%%%%%%%%%%%%%%

\iffalse
\begin{corollary}
The correlated (mixed Bayesian) Price of Anarchy of the Uniform Price Auction
with submodular valuation functions is at most $3.1462 < \frac{2e}{e-1}$.
\end{corollary}
\fi

\section{Conclusions}
\label{sec:conclusions}

We derived inefficiency upper bounds in the incomplete information model for the
widely popular Discriminatory and Uniform Price Auctions, when bidders have
submodular or subadditive valuation functions. Notably, our bounds for
subadditive valuation functions already improve upon the ones that were known
for submodular bidders~\cite{Markakis12,ST13}. Moreover, for each of the two
formats and valuation function classes we considered both the {\em standard}
bidding interface~\cite{Krishna02,Milgrom04} and a practically motivated {\em
uniform} bidding interface.
To derive our results, we elaborated on several techniques from the recent
literature on {\em Simultaneous
Auctions}~\cite{ST13,Feldman12,Christodoulou08,Bhawalkar11}. By the recent
developments of~\cite{ST13}, our bounds for submodular bidders yield improved
inefficiency bounds for {\em simultaneous} and {\em sequential}  compositions of
the considered formats. In absence of an indicative lower bound in the
incomplete information model, we showed that our upper bound of $\frac{e}{e-1}$
for the Discriminatory Auction with submodular valuation functions is best
possible, w.r.t. the currently known proof techniques. Further reducing our
upper bound for the \upa\ (with submodular bidders), poses a particularly
challenging problem, given the lower bound of $2$. For the case of subadditive
bidders, the open problems involve reducing further the upper bounds (for both
auctions) under the uniform bidding format, when bidders have subadditive
valuation functions. For the Uniform Price Auction in particular, improvements
should be attainable also under the standard bidding interface.

\bibliography{biblio}
\end{document}